\documentclass[12pt]{article}

\usepackage{amsmath,amssymb,amsthm}
\usepackage[textwidth=16cm,textheight=22.6cm]{geometry}
\usepackage[round]{natbib}
\usepackage[colorlinks,linkcolor=blue,citecolor=blue]{hyperref}
\newtheorem{theorem}{Theorem}
\newtheorem{lemma}{Lemma}
\newtheorem{corollary}{Corollary}

\title{Approximation algorithms for identical parallel machine scheduling
with assignment costs}

\author{ Wei Li\\
\normalsize School of Mathematics and Statistics, Yunnan University\\
\normalsize \texttt{wlmath@126.com}}
\date{}

\begin{document}
\maketitle

\begin{abstract}
    We study non-preemptive scheduling on identical parallel machines in
    which assigning job $j$ to machine $M_i$ incurs the cost
    $c_{ij}=a_iw_j+b_j$.
    The processing times $p_j$ and the cost weights $w_j$ are independent,
    and the number of machines is part of the input.
        For every fixed $\varepsilon\in(0,1]$ and every deadline $T$ for which
    a schedule with makespan at most $T$ exists, we give a polynomial-time
    algorithm that computes a schedule with makespan at most
    $(1+\varepsilon)T$ whose total assignment cost does not exceed the
    minimum under the deadline $T$.
    Using this algorithm as a subroutine, we obtain a polynomial-time
    approximation scheme for minimizing the makespan under a total cost
    budget whose schedules always satisfy the budget; this answers an open
    question for costs $c_{ij}=l_ic_j$ with $c_j$ independent of $p_j$.
    For the lexicographic problem, in which the total assignment cost is
    minimized among all schedules of minimum makespan, we obtain in
    polynomial time a schedule with makespan at most $1+\varepsilon$ times
    the minimum and total assignment cost at most that of a
    lexicographically optimal schedule; this improves the makespan
    guarantees of earlier algorithms for additive and multiplicative
    costs.
    For costs $c_{ij}=s_i+a_iw_j$, where $s_i$ is charged for each job
    processed on $M_i$, the same guarantees are obtained in
    quasi-polynomial time.

    \vspace{2mm}\noindent\textbf{Keywords:} Scheduling; Identical
    parallel machines; Machine assignment costs; Approximation schemes;
    Bicriteria approximation; Lexicographic optimization
\end{abstract}

\section{Introduction}
\label{sec:introduction}

In parallel machine scheduling with assignment costs, the machine
selected for a job determines both the load placed on that machine and
the cost of processing the job.
Scheduling decisions must therefore take time and cost into account
simultaneously.
The generalized assignment problem (GAP) provides a general model for
this trade-off \citep{Ross1975,Shmoys1993}.
An instance consists of $m$ parallel machines and $n$ independent
jobs; processing job $j$ on machine $M_i$ requires $p_{ij}$ time units
and incurs a cost $c_{ij}$, and every job must be processed
non-preemptively by exactly one machine.
Given a makespan bound $T$ and a cost bound $\widehat U$, the GAP asks
whether there exists a schedule with makespan at most $T$ and total
cost at most $\widehat U$.
When all costs are zero, the GAP reduces to the unrelated parallel
machine scheduling problem $R\Vert C_{\max}$ and inherits its
inapproximability \citep{Lenstra1990}.
Research has therefore concentrated on bicriteria approximation and on
special cases with additional structure.

Special cases with machine-independent processing times arise in green
manufacturing, where the machines have the same functionality and
speed but differ in energy consumption \citep{Li2016}.
In this setting, $p_{ij}=p_j$ and $c_{ij}=l_ip_j$, where $l_i$ denotes
the processing cost per unit time on machine $M_i$; the cost incurred
on a machine is thus proportional to its load.
Two optimization problems have been studied for this model: the
minimization of the total assignment cost subject to a common deadline
$T$ \citep{Jiang2023,Li2025}, and the minimization of the makespan
subject to a cost budget $\widehat U$ \citep{Li2016,Jiang2023,Li2024}.

\citet{Leung2012} proposed a lexicographic treatment of the two
criteria, motivated by the different objectives of customers and
service providers.
When the makespan is the primary objective, it is minimized first, and
the total assignment cost is then minimized over all schedules that
attain the minimum makespan.
In this line of work, the job cost parameters are specified
independently of the processing times.
Additive costs $c_{ij}=a_i+c_j$ \citep{Leung2012,Lee2014,Kononov2019}
and multiplicative costs $c_{ij}=a_ic_j$ \citep{Leung2012,Lee2014}
have received particular attention, where $a_i$ is a machine-dependent
parameter and $c_j$ is a job-dependent parameter.
\citet{Leung2012} also considered $c_{ij}=\max\{a_i,c_j\}$,
$c_{ij}=\min\{a_i,c_j\}$, and a setting with two classes of machines.

The cost functions $c_{ij}=l_ip_j$, $c_{ij}=a_i+c_j$, and
$c_{ij}=a_ic_j$ are special cases of
\begin{equation}
    c_{ij}=a_iw_j+b_j,
    \label{eq:intro-cost}
\end{equation}
where $a_i\ge0$ is a machine cost coefficient, $w_j\ge0$ is a job cost
weight, and $b_j\ge0$ is a job-specific cost that is independent of
the selected machine.
This paper studies \eqref{eq:intro-cost} with processing times $p_j$
and weights $w_j$ specified independently.
In this case, the load of a machine no longer determines the cost of
the jobs assigned to it, and an assignment must account for processing
times and weights at the same time.
Because $\sum_{j}b_j$ takes the same value for every schedule, the
terms $b_j$ do not affect the set of optimal schedules; they are
retained so that additive costs are obtained with $w_j=1$ and
$b_j=c_j$.
We also study the cost function $c_{ij}=s_i+a_iw_j$, in which a base
cost $s_i\ge0$ is charged for each job processed on $M_i$.

\subsection{Problem definition}
\label{subsec:problem}

We are given a set $\mathcal M=\{M_1,\ldots,M_m\}$ of identical
parallel machines and a set $\mathcal J=\{1,\ldots,n\}$ of independent
jobs.
Machine $M_i$ has a cost coefficient $a_i\ge0$.
Job $j$ has a processing time $p_j>0$, which is the same on every
machine, a cost weight $w_j\ge0$, and a machine-independent cost
$b_j\ge0$; assigning job $j$ to machine $M_i$ incurs the cost
$c_{ij}=a_iw_j+b_j$.
Each job is processed non-preemptively on exactly one machine, each
machine processes at most one job at a time, and machines may remain
idle.
The makespan $C_{\max}$ is the maximum completion time over all jobs,
and the total machine assignment cost ($\operatorname{TMC}$) is the sum
of the assignment costs of all jobs.

Using the three-field notation, we consider three problems.
For a common deadline $T>0$, problem
$P|C_{\max}\le T|\operatorname{TMC}$ asks for a schedule that
minimizes the total assignment cost among all schedules with makespan
at most $T$.
For a cost budget $\widehat U\ge0$, problem
$P|\operatorname{TMC}\le\widehat U|C_{\max}$ asks for a schedule that
minimizes the makespan among all schedules with total assignment cost
at most $\widehat U$.
Problem $\operatorname{LEX}(C_{\max},\operatorname{TMC})$ asks for a
schedule that minimizes the total assignment cost among all schedules
with minimum makespan.

Approximation guarantees are stated as pairs $(\alpha,\beta)$ with
$\alpha,\beta\ge1$.
For $P|C_{\max}\le T|\operatorname{TMC}$, an
$(\alpha,\beta)$-approximation algorithm returns, whenever a schedule
with makespan at most $T$ exists, a schedule with makespan at most
$\alpha T$ and total assignment cost at most $\beta$ times the minimum
total assignment cost among schedules with makespan at most $T$.
For $P|\operatorname{TMC}\le\widehat U|C_{\max}$, it returns, whenever
a schedule with total assignment cost at most $\widehat U$ exists, a
schedule with total assignment cost at most $\alpha\widehat U$ and
makespan at most $\beta$ times the minimum makespan among such
schedules.
In both cases, $\alpha$ measures the violation of the constraint and
$\beta$ measures the objective value relative to the optimum under the
original constraint; in particular, $\alpha=1$ means that the
constraint is satisfied. When $\alpha>1$, the benchmark for $\beta$ remains the optimum under the original constraint, so $\beta=1$ does not imply optimality among
the schedules that satisfy the relaxed constraint.
For $\operatorname{LEX}(C_{\max},\operatorname{TMC})$, an
$(\alpha,\beta)$-approximation algorithm returns a schedule whose
makespan is at most $\alpha$ times the minimum makespan and whose total
assignment cost is at most $\beta$ times the minimum total assignment
cost among schedules with minimum makespan.
The same notation is used for the corresponding problems on unrelated
machines; for example, $R|C_{\max}\le T|\operatorname{TMC}$ denotes
the cost-minimization version of the GAP with a common capacity $T$.
Several results in the literature are stated for a given cost bound
$\widehat U$: if a schedule with makespan at most $T$ and total cost at
most $\widehat U$ exists, a schedule with makespan at most $\alpha T$
and total cost at most $\beta\widehat U$ is returned.
For rational data, such a guarantee is equivalent to an
$(\alpha,\beta)$-approximation for $R|C_{\max}\le T|\operatorname{TMC}$,
by a binary search over $\widehat U$.
Formal definitions are given in Section~\ref{sec:preliminaries}.

\subsection{Related work}
\label{subsec:related}

\paragraph{General assignment costs.}
For $R|C_{\max}\le T|\operatorname{TMC}$, \citet{Lin1992} obtained a
polynomial-time $(2+1/\varepsilon,1+\varepsilon)$-approximation
algorithm for every $\varepsilon>0$, and \citet{Shmoys1993}
subsequently improved the approximation ratio to $(2,1)$.
Their rounding technique also applies to machine-dependent capacities
$T_i$, in which case the load of each machine $M_i$ is at most $2T_i$.
Without costs, \citet{Potts1985} and \citet{Lenstra1990} gave
$2$-approximation algorithms for $R\Vert C_{\max}$ with a fixed and an
arbitrary number of machines, respectively, and \citet{Shchepin2005}
improved the ratio to $2-1/m$.
\citet{Lenstra1990} also proved that, unless $\mathrm{P}=\mathrm{NP}$,
no polynomial-time algorithm for $R\Vert C_{\max}$ achieves a ratio
less than $3/2$, even for restricted assignment instances with
$p_{ij}\in\{p_j,\infty\}$.

Because every schedule satisfies the cost guarantee when all costs are
zero, this result implies that, unless $\mathrm{P}=\mathrm{NP}$, no
polynomial-time $(\alpha,\beta)$-approximation algorithm with
$\alpha<3/2$ exists for $R|C_{\max}\le T|\operatorname{TMC}$, for any
$\beta\ge1$.  For a fixed number of machines, \citet{Jansen2001} and
\citet{Angel2001} developed algorithms that, for every
$\varepsilon>0$, achieve $(1+\varepsilon,1+\varepsilon)$- and
$(1+\varepsilon,1)$-approximations, respectively, in time polynomial
in $n$ and $1/\varepsilon$.
For instances in which the machines fall into a constant number of
types, with machines of the same type having identical processing
times and identical costs, \citet{Nguyen2020} obtained polynomial-time
$(1+\varepsilon,1+\varepsilon)$-approximation algorithms for
$R|\operatorname{TMC}\le\widehat U|C_{\max}$.
They also raised the question of whether a polynomial-time
$(1,1+\varepsilon)$-approximation algorithm, which respects the
budget, exists for this problem.

\paragraph{Identical machines without costs.}
\citet{Hochbaum1987} developed the first polynomial-time approximation
scheme (PTAS) for $P\Vert C_{\max}$ by combining dual approximation,
rounding of processing times, and dynamic programming over
machine configurations.
\citet{Alon1998} extended this approach to a broad class of objective
functions of the machine loads, and efficient polynomial-time
approximation schemes (EPTASs) were subsequently developed by
\citet{Jansen2010} and \citet{Jansen2020}.
Since $P\Vert C_{\max}$ is strongly NP-hard, it admits no fully
polynomial-time approximation scheme (FPTAS) unless
$\mathrm{P}=\mathrm{NP}$ \citep{Garey1978}.
The algorithms in this paper build on the dual approximation framework
of \citet{Hochbaum1987}.
The main difficulty is to ensure that the total assignment cost does
not exceed the optimum under the original deadline while the
processing times are rounded.

\paragraph{Costs proportional to processing times.}
For $c_{ij}=l_ip_j$ with $l_i>0$, let $\gamma=\max_il_i/\min_il_i$.
For $P|C_{\max}\le T|\operatorname{TMC}$, \citet{Jiang2023} gave a
polynomial-time $(1,(2+\gamma)/3)$-approximation algorithm, and
\citet{Li2025} improved the guarantee to $(1,1+(\gamma-1)\eta)$ for any
fixed $\eta\in(0,1/3]$.
\citet{Li2025} further showed that, when $\gamma$ is unbounded, no
polynomial-time $(1,\beta)$-approximation algorithm with constant
$\beta$ exists unless $\mathrm{P}=\mathrm{NP}$, and obtained a
linear-time $(1,1+\varepsilon)$-approximation algorithm for the case in
which the maximum unit processing cost is a fixed constant.
For $P|\operatorname{TMC}\le\widehat U|C_{\max}$, \citet{Li2016} gave
an $O(n\log n)$-time $(1,2)$-approximation algorithm for the
non-preemptive problem and a linear-time exact algorithm for the
preemptive problem.
\citet{Jiang2023} improved the guarantee for the non-preemptive problem
to $(1,(\sqrt{33}+1)/4)$, where $(\sqrt{33}+1)/4\approx1.686$.
\citet{Li2024} then obtained an $O(n\log n/\varepsilon)$-time
$(1,4/3+\varepsilon)$-approximation algorithm and the first PTAS for
this problem, that is, a polynomial-time
$(1,1+\varepsilon)$-approximation algorithm for each fixed
$\varepsilon>0$; for a fixed number of machines, they also developed
an FPTAS.
They also proposed the development of PTASs for two extensions:
$c_{ij}=s_i+l_ip_j$, where $s_i\ge0$ is a setup cost charged for each
job processed on $M_i$, and $c_{ij}=l_ic_j$, where $c_j\ge0$ is
specified independently of $p_j$.
\citet{Gong2026} studied a related model in which jobs may be rejected
at a penalty and the machines have different energy consumption rates.

\paragraph{Lexicographic optimization.}
For $F\in\{C_{\max},\sum_jC_j\}$, \citet{Leung2012} studied both
$\operatorname{LEX}(F,\operatorname{TMC})$ and
$\operatorname{LEX}(\operatorname{TMC},F)$.
In the non-preemptive setting, both problems with $F=\sum_jC_j$ are
solvable in $O(mn^2+n^2\log n)$ time on unrelated machines, whereas
both problems with $F=C_{\max}$ are strongly NP-hard even on identical
machines.
For $\operatorname{LEX}(C_{\max},\operatorname{TMC})$ on identical
machines, \citet{Lee2014} proposed heuristics that are
$(r+1,1)$-approximation algorithms for additive and multiplicative
costs, where $r\in(1,2)$ is the approximation ratio of the algorithm
used for $P\Vert C_{\max}$.
For additive costs, \citet{Kononov2019} improved the guarantee to
$(3r/2,1)$.
\citet{Kononov2019} also studied
$P|C_{\max}\le\Delta C_{\max}^*|\operatorname{TMC}$, where
$C_{\max}^*$ is the optimal makespan of $P\Vert C_{\max}$ and
$\Delta\ge1$ is a fixed constant.
They proved that this problem is strongly NP-hard for both additive
and multiplicative costs and that, unless $\mathrm{P}=\mathrm{NP}$, it
admits no pseudo-polynomial-time $(1,\beta)$-approximation algorithm
for any finite $\beta$.
For additive costs with $0<a_1\le\cdots\le a_m$ and
$\Delta\ge1+r-1/m$, they gave a polynomial-time
$(1,\beta)$-approximation algorithm in which $\beta$ depends on
$\Delta$, $r$, and $a_m/a_1$, together with an $O(mn^{2k})$-time
dynamic programming algorithm for instances with a fixed number $k$ of
distinct processing times.
\citet{Lee2014} additionally studied
$\operatorname{LEX}(\sum_jC_j,\operatorname{MMC})$, where
$\operatorname{MMC}$ denotes the maximum total assignment cost
incurred on a single machine.

\paragraph{Relation to the present work.}
Several of the results above already yield guarantees of the type
considered here, either for particular cost functions or for a fixed
number of machines.
For fixed $m$, the algorithm of \citet{Angel2001} provides a
$(1+\varepsilon,1)$-approximation for arbitrary costs $c_{ij}$;
combined with the search procedures in Sections~\ref{sec:ptas-budget}
and~\ref{sec:lex-approximation}, it yields the guarantees of this
paper for all three problems.
The results of this paper are therefore of interest when $m$ is part
of the input.
The algorithms of \citet{Nguyen2020} require identical costs within
each of a constant number of machine types, a condition that fails
when the coefficients $a_i$ take more than a constant number of
distinct values, and they provide $(1+\varepsilon,1+\varepsilon)$-
rather than $(1,1+\varepsilon)$-approximations.
For $c_{ij}=l_ip_j$ and integral data, the
$(1,1+\varepsilon)$-approximation algorithm of \citet{Li2024}, combined
with a binary search over the budget, yields a
$(1+\varepsilon,1)$-approximation algorithm for
$P|C_{\max}\le T|\operatorname{TMC}$ and hence for
$\operatorname{LEX}(C_{\max},\operatorname{TMC})$.
The novelty of the present work therefore lies in allowing the weights
$w_j$ to be independent of the processing times $p_j$, which covers
the additive and multiplicative costs of \citet{Leung2012} and the
extension $c_{ij}=l_ic_j$ proposed by \citet{Li2024}.

The structure of the cost function cannot be dispensed with.
For arbitrary costs $c_{ij}$, even with machine-independent processing
times, the restricted assignment problem is obtained by setting
$c_{ij}=0$ for permitted assignments, $c_{ij}=1$ otherwise, and
$\widehat U=0$.
By the result of \citet{Lenstra1990}, there is then no polynomial-time
$(1,\beta)$-approximation algorithm with $\beta<3/2$ for
$P|\operatorname{TMC}\le\widehat U|C_{\max}$ and no polynomial-time
$(\alpha,1)$-approximation algorithm with $\alpha<3/2$ for
$P|C_{\max}\le T|\operatorname{TMC}$, unless $\mathrm{P}=\mathrm{NP}$.
This does not conflict with the results of this paper.
Under \eqref{eq:intro-cost}, $c_{ij}-c_{i'j}=(a_i-a_{i'})w_j$ for every
job $j$, so all jobs rank the machines in the same order of $a_i$.
Such costs cannot express machine restrictions in which two jobs
prefer different machines, as the instances in the above reduction
require.
The inapproximability results of \citet{Kononov2019} and \citet{Li2025}
concern $(1,\beta)$-approximations for deadline-constrained problems
and therefore do not apply to the $(1+\varepsilon,1)$-approximations
developed here.

\subsection{Our contributions}
\label{subsec:contributions}

The main results concern the cost function \eqref{eq:intro-cost} with
processing times and weights specified independently and with the
number of machines part of the input.

For each fixed $0<\varepsilon\le1$, we develop a polynomial-time
$(1+\varepsilon,1)$-approximation algorithm for
$P|C_{\max}\le T|\operatorname{TMC}$.
Whenever a schedule meeting the deadline $T$ exists, the algorithm
returns a schedule with makespan at most $(1+\varepsilon)T$ whose total
assignment cost does not exceed the minimum under the original
deadline.
Following the dual approximation framework, the algorithm rounds the
processing times of large jobs and selects machine configurations by
dynamic programming, while the individual jobs are chosen by two
ordering rules.
Within each rounded processing time class, large jobs are assigned to
the machines in non-increasing order of weight.
Small jobs are assigned in non-increasing order of $w_j/p_j$ according
to cumulative, rather than machine-wise, processing time bounds.
Under these rules, the machine index and the cumulative numbers of
large jobs assigned from each class determine exactly which small jobs
have been assigned, so that the dynamic program requires no state
component for small jobs.
An exchange argument based on prefix sums of weights shows that every
schedule meeting the deadline $T$ can be transformed into a schedule
generated by these rules without increasing its total assignment cost,
at the expense of exceeding $T$ by a factor of at most
$1+\varepsilon$.

Using this algorithm as a subroutine, we obtain, for each fixed
$0<\varepsilon\le1$, a polynomial-time
$(1,1+\varepsilon)$-approximation algorithm for
$P|\operatorname{TMC}\le\widehat U|C_{\max}$ and a polynomial-time
$(1+\varepsilon,1)$-approximation algorithm for
$\operatorname{LEX}(C_{\max},\operatorname{TMC})$.
The former satisfies the original budget and is therefore a PTAS; in
particular, it provides the PTAS for $c_{ij}=l_ic_j$ sought by
\citet{Li2024}.
For multiplicative costs, the latter improves the
$(r+1,1)$-approximation of \citet{Lee2014} to a
$(1+\varepsilon,1)$-approximation; for additive costs, it improves the
$(3r/2,1)$-approximation of \citet{Kononov2019} in the same way.
The algorithms of \citet{Lee2014} and \citet{Kononov2019} have
substantially lower running times, so the improvement concerns the
approximation guarantee rather than computational efficiency.

For $c_{ij}=s_i+a_iw_j$, with no ordering relation imposed between
$s_i$ and $a_i$, the total base cost depends on the number of jobs
assigned to each machine and cannot be controlled through the weights
alone.
We assign every job with processing time at most $\varepsilon T/(2n)$
to a machine of minimum assignment cost and schedule the remaining
jobs, which fall into $O(\varepsilon^{-1}\log(n/\varepsilon))$ rounded
classes, by dynamic programming.
This yields quasi-polynomial-time algorithms with the same guarantees
for all three problems.
In particular, it gives a quasi-polynomial-time approximation scheme
(QPTAS) that satisfies the original budget for $c_{ij}=s_i+l_ip_j$,
which partially answers the second question of \citet{Li2024}; the
existence of a PTAS for this cost function remains open.

The remainder of the paper is organized as follows.
Section~\ref{sec:preliminaries} introduces the notation, defines the
approximation guarantees formally, and presents a lemma for comparing
total assignment costs.
Section~\ref{sec3} presents and analyzes the algorithm for
$P|C_{\max}\le T|\operatorname{TMC}$.
Using this algorithm as a subroutine,
Sections~\ref{sec:ptas-budget} and~\ref{sec:lex-approximation} develop
the algorithms for $P|\operatorname{TMC}\le\widehat U|C_{\max}$ and
$\operatorname{LEX}(C_{\max},\operatorname{TMC})$, respectively.
Section~\ref{sec:affine-qptas} extends the approach to the cost function
$c_{ij}=s_i+a_iw_j$, and Section~\ref{sec:conclusion} concludes with a
discussion of open problems.

\section{Preliminaries}
\label{sec:preliminaries}

\paragraph{Instances and schedules.}
For a positive integer $k$, let $[k]=\{1,\ldots,k\}$.
We use the model of Section~\ref{subsec:problem} and write
$I=(\mathcal M,\mathcal J;p,a,w,b)$ for the machine and job data, where
$p=(p_1,\ldots,p_n)$, $a=(a_1,\ldots,a_m)$, $w=(w_1,\ldots,w_n)$, and
$b=(b_1,\ldots,b_n)$ are the vectors of processing times, machine cost
coefficients, job cost weights, and job-specific costs, respectively.
An instance of $P|C_{\max}\le T|\operatorname{TMC}$ is a pair $(I,T)$
with $T>0$, an instance of $P|\operatorname{TMC}\le\widehat U|C_{\max}$
is a pair $(I,\widehat U)$ with $\widehat U\ge0$, and an instance of
$\operatorname{LEX}(C_{\max},\operatorname{TMC})$ is given by $I$ alone.
For each of these instances, $|I|$ denotes the length of its binary
encoding, including $T$ or $\widehat U$ where applicable.
By reindexing the machines, we assume without loss of generality that
$0\le a_1\le\cdots\le a_m$.

For $A\subseteq\mathcal J$, let $P(A)=\sum_{j\in A}p_j$ and
$W(A)=\sum_{j\in A}w_j$ denote the total processing time and the total
weight of $A$, respectively, with $P(\varnothing)=W(\varnothing)=0$.
A schedule is an $m$-tuple $\sigma=(B_1,\ldots,B_m)$ of pairwise
disjoint, possibly empty, subsets of $\mathcal J$ whose union is
$\mathcal J$, where $B_i$ is the set of jobs assigned to machine $M_i$.
A partial schedule is defined in the same way, except that the union of
its sets may be a proper subset of $\mathcal J$.
Since neither objective depends on the order in which the jobs in
$B_i$ are processed, a schedule is determined by its assignment of
jobs to machines.
Let $\mathcal S$ denote the set of all schedules.
The makespan and the total assignment cost of $\sigma\in\mathcal S$ are
\begin{equation}
    C_{\max}(\sigma)=\max_{i\in[m]}P(B_i)
    \quad\text{and}\quad
    Z(\sigma)=\sum_{i\in[m]}\sum_{j\in B_i}c_{ij}
    =\sum_{i\in[m]}a_iW(B_i)+C_0,
    \label{eq:prelim-cost}
\end{equation}
where $C_0=\sum_{j\in\mathcal J}b_j$ does not depend on the assignment.

\paragraph{Optimal values and basic bounds.}
Throughout the paper, the minimum over an empty set is defined as
$+\infty$.
For $T>0$, let
\[
    \mathcal F(T)=\{\sigma\in\mathcal S:C_{\max}(\sigma)\le T\}
    \quad\text{and}\quad
    Z^*(T)=\min_{\sigma\in\mathcal F(T)}Z(\sigma)
\]
denote the set of schedules that meet the deadline $T$ and the minimum
total assignment cost over this set.
For $\widehat U\ge0$, the optimal makespan of
$P|\operatorname{TMC}\le\widehat U|C_{\max}$ is
\[
    C^*(\widehat U)
    =\min\{C_{\max}(\sigma):\sigma\in\mathcal S,\ Z(\sigma)\le\widehat U\}.
\]
For $\operatorname{LEX}(C_{\max},\operatorname{TMC})$, let
$C^*=\min_{\sigma\in\mathcal S}C_{\max}(\sigma)$ denote the minimum
makespan, and let $Z^*=Z^*(C^*)$ denote the minimum total assignment
cost among schedules with makespan $C^*$.
A schedule $\sigma^*$ with $C_{\max}(\sigma^*)=C^*$ and
$Z(\sigma^*)=Z^*$ is called lexicographically optimal.
Since $\mathcal F(T_1)\subseteq\mathcal F(T_2)$ whenever
$0<T_1\le T_2$, the function $Z^*$ is non-increasing, that is,
\begin{equation}
    Z^*(T_2)\le Z^*(T_1),
    \qquad 0<T_1\le T_2.
    \label{eq:prelim-monotone}
\end{equation}

Let $p_{\max}=\max_{j\in\mathcal J}p_j$ and
$L=\max\{p_{\max},P(\mathcal J)/m\}$.
Every schedule $\sigma\in\mathcal S$ satisfies
\begin{equation}
    L\le C_{\max}(\sigma)\le P(\mathcal J),
    \label{eq:prelim-makespan-bounds}
\end{equation}
because its makespan is at least the longest processing time and at
least the average machine load, and at most the total processing time.
In particular, $\mathcal F(T)\ne\varnothing$ implies $p_{\max}\le T$
and $P(\mathcal J)\le mT$.
Moreover, the definition of $L$ and the inequality
$P(\mathcal J)\le np_{\max}$ give
$1\le P(\mathcal J)/L\le\min\{m,n\}$.

Since $a_1\le a_i$ for every $i\in[m]$ and $w_j\ge0$ for every
$j\in\mathcal J$, every schedule satisfies
$Z(\sigma)\ge a_1W(\mathcal J)+C_0$.
Equality holds for the schedule $(\mathcal J,\varnothing,\ldots,
\varnothing)$, which assigns all jobs to $M_1$ and has makespan
$P(\mathcal J)$.
Consequently, $P|\operatorname{TMC}\le\widehat U|C_{\max}$ has a
schedule satisfying the budget if and only if
$a_1W(\mathcal J)+C_0\le\widehat U$, in which case
$L\le C^*\le C^*(\widehat U)\le P(\mathcal J)$.
The same schedule shows that $Z^*(T)=a_1W(\mathcal J)+C_0$ for every
$T\ge P(\mathcal J)$.

\paragraph{Approximation guarantees.}
We now state formally the guarantees described in
Section~\ref{subsec:problem}.
Let $\alpha,\beta\ge1$.
An algorithm for $P|C_{\max}\le T|\operatorname{TMC}$ is an
$(\alpha,\beta)$-approximation algorithm if, for every instance
$(I,T)$ with $\mathcal F(T)\ne\varnothing$, it returns a schedule
$\widehat\sigma\in\mathcal S$ such that
\[
    C_{\max}(\widehat\sigma)\le\alpha T
    \quad\text{and}\quad
    Z(\widehat\sigma)\le\beta Z^*(T).
\]
An algorithm for $P|\operatorname{TMC}\le\widehat U|C_{\max}$ is an
$(\alpha,\beta)$-approximation algorithm if, for every instance
$(I,\widehat U)$ with $C^*(\widehat U)<+\infty$, it returns a schedule
$\widehat\sigma\in\mathcal S$ such that
\[
    Z(\widehat\sigma)\le\alpha\widehat U
    \quad\text{and}\quad
    C_{\max}(\widehat\sigma)\le\beta C^*(\widehat U).
\]
On an instance that admits no schedule satisfying the original
constraint, such an algorithm may return an arbitrary schedule or
report that no feasible schedule exists.
An algorithm for $\operatorname{LEX}(C_{\max},\operatorname{TMC})$ is
an $(\alpha,\beta)$-approximation algorithm if, for every instance $I$,
it returns a schedule $\widehat\sigma\in\mathcal S$ such that
\[
    C_{\max}(\widehat\sigma)\le\alpha C^*
    \quad\text{and}\quad
    Z(\widehat\sigma)\le\beta Z^*.
\]
These definitions do not restrict the running time, which is stated
separately for each algorithm.
In Section~\ref{sec:affine-qptas}, they are used with the total
assignment cost defined by the cost function considered there.

\paragraph{Approximation schemes.}
Given a minimization problem, an $\alpha$-approximation algorithm
produces, for every feasible instance $I$, a feasible solution
with objective value at most $\alpha$ times the optimal
objective value in time polynomial in the input size $|I|$.
A quasi-polynomial-time approximation scheme (QPTAS) is a
family of algorithms such that, for every fixed
$0<\varepsilon\le1$, it contains an algorithm with
approximation ratio $1+\varepsilon$ and running time
$|I|^{(\log |I|)^{O(1)}}$.
If the running time is polynomial in $|I|$ for each fixed
$\varepsilon$, the scheme is called a polynomial-time
approximation scheme (PTAS).
A PTAS is called an efficient polynomial-time approximation
scheme (EPTAS) if its running time is
$f(1/\varepsilon)|I|^c$, where $c$ is a constant independent
of $\varepsilon$ and $f$ is a computable function.
An EPTAS is called a fully polynomial-time approximation
scheme (FPTAS) if $f(1/\varepsilon)$ is polynomial in
$1/\varepsilon$, so that its running time is polynomial
in both $|I|$ and $1/\varepsilon$.
For $P|\operatorname{TMC}\le\widehat U|C_{\max}$, the budget is a
constraint and the makespan is the objective, so a family of
$(1,1+\varepsilon)$-approximation algorithms, which satisfy the budget,
is an approximation scheme in this sense.
For $P|C_{\max}\le T|\operatorname{TMC}$ and
$\operatorname{LEX}(C_{\max},\operatorname{TMC})$, the guarantees
obtained in this paper allow the makespan to exceed the deadline or
the minimum makespan by a factor of $1+\varepsilon$, and we do not
refer to the corresponding algorithms as approximation schemes.

\paragraph{Comparison of total assignment costs.}
The following lemma gives a sufficient condition for comparing the
total assignment costs of two schedules of the same instance.
It is an equivalent form of Lemma~2.1 in \citet{Li2024}, stated in
terms of prefix sums; a short proof is included for completeness.

\begin{lemma}
    \label{lem:mctcm-prefix-cost}
    Let $a_1\le\cdots\le a_m$, and let $(x_1,\ldots,x_m)$ and
    $(y_1,\ldots,y_m)$ be real vectors such that
    $\sum_{i=1}^{m}x_i=\sum_{i=1}^{m}y_i$ and
    $\sum_{i=1}^{r}x_i\ge\sum_{i=1}^{r}y_i$ for $r=1,\ldots,m-1$.
    Then $\sum_{i=1}^{m}a_ix_i\le\sum_{i=1}^{m}a_iy_i$.
\end{lemma}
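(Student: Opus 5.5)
The plan is to use Abel summation (summation by parts) on the difference $d_i=x_i-y_i$. Define the prefix sums $D_r=\sum_{i=1}^{r}d_i$ for $r=0,\ldots,m$, with $D_0=0$. The hypotheses then say exactly that $D_r\ge0$ for $r=1,\ldots,m-1$ and $D_m=0$. The claim is equivalent to
\[
    \sum_{i=1}^{m}a_id_i\le0,
\]
so it suffices to rewrite this weighted sum in terms of the $D_r$.

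The key step is the identity $d_i=D_i-D_{i-1}$. Substituting it and regrouping the terms gives
\[
    \sum_{i=1}^{m}a_id_i=\sum_{i=1}^{m}a_i(D_i-D_{i-1})
    =a_mD_m-\sum_{r=1}^{m-1}(a_{r+1}-a_r)D_r .
\]
The regrouping works because each $D_r$ with $1\le r\le m-1$ appears with coefficient $a_r$ from the $i=r$ term and with $-a_{r+1}$ from the $i=r+1$ term. The boundary term $D_0$ vanishes.

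To conclude, note that $D_m=0$ kills the first term. Each remaining summand $(a_{r+1}-a_r)D_r$ is nonnegative, since $a_{r+1}\ge a_r$ by the ordering assumption and $D_r\ge0$ by hypothesis. Hence the whole expression is at most $0$, which is the claim. The case $m=1$ is trivial, because then $x_1=y_1$.

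There is no real obstacle here; the only care needed is the index bookkeeping in the summation by parts. No sign or nonnegativity assumption on the $a_i$ themselves is required, because the equality $D_m=0$ removes the only term in which the $a_i$ appear without being differenced.
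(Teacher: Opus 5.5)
Your proposal is correct and follows the paper's own proof essentially verbatim: prefix sums $D_r$ of $x_i-y_i$ with $D_0=D_m=0$, the summation-by-parts identity $\sum_i a_i(D_i-D_{i-1})=a_mD_m-\sum_{r=1}^{m-1}(a_{r+1}-a_r)D_r$, and the sign argument from $a_{r+1}\ge a_r$ and $D_r\ge0$.
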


\begin{proof}
    Let $D_0=0$ and $D_r=\sum_{i=1}^{r}(x_i-y_i)$ for $r\in[m]$.
    By assumption, $D_m=0$ and $D_r\ge0$ for $r\in[m-1]$.
    Summation by parts gives
    \[
        \sum_{i=1}^{m}a_i(x_i-y_i)
        =\sum_{i=1}^{m}a_i(D_i-D_{i-1})
        =a_mD_m-\sum_{r=1}^{m-1}(a_{r+1}-a_r)D_r
        \le0,
    \]
    since $a_{r+1}-a_r\ge0$ and $D_r\ge0$ for every $r\in[m-1]$.
\end{proof}

\begin{corollary}
    \label{cor:prelim-prefix-weight}
    Let $\sigma=(B_1,\ldots,B_m)$ and $\sigma'=(B_1',\ldots,B_m')$ be
    schedules of the same instance.
    If $\sum_{i=1}^{r}W(B_i)\ge\sum_{i=1}^{r}W(B_i')$ for
    $r=1,\ldots,m-1$, then $Z(\sigma)\le Z(\sigma')$.
\end{corollary}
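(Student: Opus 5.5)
The corollary follows directly from Lemma~\ref{lem:mctcm-prefix-cost} applied to the vectors of machine weights. Set $x_i=W(B_i)$ and $y_i=W(B_i')$ for $i\in[m]$. Both $\sigma$ and $\sigma'$ are schedules, not partial schedules, so each is a partition of $\mathcal J$ into $m$ pairwise disjoint parts. Hence
\[
\sum_{i=1}^{m}x_i=W(\mathcal J)=\sum_{i=1}^{m}y_i ,
\]
which is the equality hypothesis of the lemma. The prefix inequalities $\sum_{i=1}^{r}x_i\ge\sum_{i=1}^{r}y_i$ for $r=1,\ldots,m-1$ are exactly the assumption of the corollary. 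The machine indexing $a_1\le\cdots\le a_m$ is the standing convention of this section.

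The lemma then gives $\sum_{i=1}^{m}a_iW(B_i)\le\sum_{i=1}^{m}a_iW(B_i')$. By \eqref{eq:prelim-cost}, $Z(\sigma)$ and $Z(\sigma')$ equal these two sums plus the same constant $C_0=\sum_{j\in\mathcal J}b_j$, because $C_0$ does not depend on the assignment. Adding $C_0$ to both sides yields $Z(\sigma)\le Z(\sigma')$.

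The only point needing care is the equality of total weights. It relies on both $\sigma$ and $\sigma'$ covering all of $\mathcal J$. If partial schedules were allowed, the totals could differ and the lemma would not apply directly. Apart from this check, the argument is a direct substitution into the lemma and I expect no real obstacle.
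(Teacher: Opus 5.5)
Your proof is correct and is the same as the paper's. You apply Lemma~\ref{lem:mctcm-prefix-cost} with $x_i=W(B_i)$ and $y_i=W(B_i')$, use that both schedules partition $\mathcal J$ so that both totals equal $W(\mathcal J)$, and then add the assignment-independent constant $C_0$ from \eqref{eq:prelim-cost}.
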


\begin{proof}
    Both schedules assign every job exactly once, so
    $\sum_{i=1}^{m}W(B_i)=\sum_{i=1}^{m}W(B_i')=W(\mathcal J)$.
    Applying Lemma~\ref{lem:mctcm-prefix-cost} with $x_i=W(B_i)$ and
    $y_i=W(B_i')$ gives
    $\sum_{i=1}^{m}a_iW(B_i)\le\sum_{i=1}^{m}a_iW(B_i')$.
    The claim follows from \eqref{eq:prelim-cost}, because $C_0$ is the
    same for both schedules.
\end{proof}

\section{An approximation algorithm for \texorpdfstring$P|C_{\max}\le T|\operatorname{TMC}$}
\label{sec3}

Throughout this section, the deadline $T>0$ and the accuracy parameter
$0<\varepsilon\le1$ are fixed, and we set
\[
    \delta=\varepsilon/2,
    \qquad
    \tau=\delta T,
    \qquad
    H=(1+\delta)T.
\]
By Section~\ref{sec:preliminaries}, $\mathcal F(T)=\varnothing$ if
$p_{\max}>T$ or $P(\mathcal J)>mT$.
We therefore assume that $p_{\max}\le T$ and $P(\mathcal J)\le mT$, but
we do not assume that $\mathcal F(T)\ne\varnothing$.
Following the dual approximation framework of \citet{Hochbaum1987}, the
algorithm rounds the processing times of large jobs and selects machine
configurations by dynamic programming.
Each sequence of configurations determines a single schedule through
two ordering rules, and the dynamic program selects the cheapest of
these schedules.
Section~\ref{subsec:sec3-rounding} introduces the rounding and the
configurations, and Sections~\ref{subsec:sec3-large}
and~\ref{subsec:sec3-small} present the ordering rules for large and
small jobs.
Section~\ref{subsec:sec3-canonical} compares the resulting schedules
with the schedules in $\mathcal F(T)$, and
Sections~\ref{subsec:sec3-dp} and~\ref{subsec:sec3-algorithm} give the
dynamic program and the analysis of the algorithm.

\subsection{Rounding and configurations}
\label{subsec:sec3-rounding}

A job $j$ is large if $p_j>\tau$ and small otherwise; let
$\mathcal J_L$ and $\mathcal J_S$ denote the sets of large and small
jobs, respectively.
Since $P(\mathcal J)\le mT$ and every large job has processing time
greater than $\delta T$, we have $|\mathcal J_L|<m/\delta$.

Let $\pi_t=(1+\delta)^t\tau$ for $t=0,1,\ldots$, and let $K$ be the
smallest index with $\pi_K\ge T$.
Then $\pi_{K-1}<T\le\pi_K$ and
\[
    K=\left\lceil\frac{\log(1/\delta)}{\log(1+\delta)}\right\rceil
    =O\bigl(\delta^{-1}\log(1/\delta)\bigr),
\]
where the bound follows from $\log(1+\delta)\ge\delta/(1+\delta)$.
Since $\tau<p_j\le T$ for every large job, the classes
\begin{equation}
    \mathcal J_t=\{j\in\mathcal J_L:\pi_{t-1}<p_j\le\pi_t\},
    \qquad t\in[K],
    \label{eq:mctcm-job-classes}
\end{equation}
are pairwise disjoint and cover $\mathcal J_L$; some of them may be
empty.
Let $N_t=|\mathcal J_t|$ and $\boldsymbol N=(N_1,\ldots,N_K)$.
The processing time of each job $j\in\mathcal J_t$ is rounded up to
$\bar p_j=\pi_t$, whereas small jobs keep their processing times.
Since $p_j\le\bar p_j=(1+\delta)\pi_{t-1}<(1+\delta)p_j$ for
$j\in\mathcal J_t$, every $A\subseteq\mathcal J_L$ satisfies
\begin{equation}
    P(A)\le\sum_{j\in A}\bar p_j\le(1+\delta)P(A).
    \label{eq:mctcm-rounding-bound}
\end{equation}

All jobs in $\mathcal J_t$ have the same rounded processing time, so the
rounded load of the large jobs on a machine depends only on the number
of jobs taken from each class.
A configuration is a vector $\boldsymbol q=(q_1,\ldots,q_K)\in
\mathbb Z_{\ge0}^K$, where $q_t$ is the number of jobs from
$\mathcal J_t$ on the machine.
For $\boldsymbol x\in\mathbb Z_{\ge0}^K$, let
$\bar P(\boldsymbol x)=\sum_{t=1}^{K}x_t\pi_t$, and write
$\boldsymbol x\le\boldsymbol y$ if $x_t\le y_t$ for all $t\in[K]$.
With $\kappa=\lfloor(1+\delta)/\delta\rfloor$, we use the set of
configurations
\begin{equation}
    \mathcal Q=\Bigl\{
    \boldsymbol q\in\mathbb Z_{\ge0}^K:
    \boldsymbol q\le\boldsymbol N,\
    \sum_{t=1}^{K}q_t\le\kappa,\
    \bar P(\boldsymbol q)\le H
    \Bigr\}.
    \label{eq:mctcm-configurations}
\end{equation}
Because $\pi_t>\delta T$ for every $t\in[K]$, the condition
$\sum_tq_t\le\kappa$ is implied by $\bar P(\boldsymbol q)\le H$.
It is stated explicitly because $\mathcal Q$ is constructed by
enumerating the $\binom{K+\kappa}{K}$ nonnegative integer vectors with
component sum at most $\kappa$.
Hence $|\mathcal Q|\le\binom{K+\kappa}{K}\le(\kappa+1)^K$, which is a
constant for fixed $\varepsilon$.
The zero vector belongs to $\mathcal Q$.

A sequence $\boldsymbol q_1,\ldots,\boldsymbol q_m\in\mathcal Q$ with
$\sum_{i=1}^{m}\boldsymbol q_i=\boldsymbol N$ is called a configuration
sequence, where $\boldsymbol q_i=(q_{i,1},\ldots,q_{i,K})$ is the
configuration of $M_i$.
Every schedule $\sigma=(B_1,\ldots,B_m)\in\mathcal F(T)$ induces a
configuration sequence.
Indeed, let $q_{i,t}=|B_i\cap\mathcal J_t|$.
By \eqref{eq:mctcm-rounding-bound},
$\bar P(\boldsymbol q_i)\le(1+\delta)P(B_i\cap\mathcal J_L)\le H$, so
$\boldsymbol q_i\in\mathcal Q$ for every $i\in[m]$, and
$\sum_{i=1}^{m}\boldsymbol q_i=\boldsymbol N$ because every large job
is assigned exactly once.

\subsection{Assignment of large jobs}
\label{subsec:sec3-large}

A configuration specifies only the number of jobs taken from each
class.
Since the jobs of a class may have different weights, the individual
jobs are selected by the following rule.
For each nonempty class $\mathcal J_t$, let
$\ell_{t,1},\ldots,\ell_{t,N_t}$ be its jobs in non-increasing order of
weight, with ties broken by increasing job index.
Given configurations $\boldsymbol q_1,\boldsymbol q_2,\ldots\in\mathcal Q$
of the machines $M_1,M_2,\ldots$, let
$\boldsymbol z^0=\boldsymbol0$ and
$\boldsymbol z^i=\sum_{k=1}^{i}\boldsymbol q_k$, so that $z_t^i$ is
the number of jobs from $\mathcal J_t$ assigned to $M_1,\ldots,M_i$.
The machines are considered in the order $M_1,\ldots,M_m$, and machine
$M_i$ receives the next $q_{i,t}$ jobs of each list:
\begin{equation}
    L_i=\bigcup_{t=1}^{K}\{\ell_{t,r}:z_t^{i-1}<r\le z_t^i\},
    \qquad i\in[m].
    \label{eq:mctcm-large-assignment}
\end{equation}
For a configuration sequence, $\boldsymbol z^m=\boldsymbol N$, so the
sets $L_1,\ldots,L_m$ are pairwise disjoint with union $\mathcal J_L$,
and $|L_i\cap\mathcal J_t|=q_{i,t}$ for all $i$ and $t$.
In particular, $P(L_i)\le\sum_{j\in L_i}\bar p_j=\bar P(\boldsymbol q_i)
\le H$, regardless of which jobs are selected.
Moreover, $L_i$ depends only on $\boldsymbol q_1,\ldots,
\boldsymbol q_i$.
The next lemma shows that, among all assignments of large jobs with the
same numbers of jobs from each class on each machine, the rule
\eqref{eq:mctcm-large-assignment} maximizes every prefix sum of the
machine weights and therefore minimizes the total assignment cost of
large jobs.

\begin{lemma}
    \label{lem:mctcm-large-order}
    Let $L_1,\ldots,L_m$ be given by \eqref{eq:mctcm-large-assignment}
    for a configuration sequence, and let $L_1',\ldots,L_m'$ be pairwise
    disjoint, possibly empty, subsets of $\mathcal J_L$ with union
    $\mathcal J_L$ such that $|L_i'\cap\mathcal J_t|=q_{i,t}$ for all
    $i\in[m]$ and $t\in[K]$.
    Then
    \begin{equation}
        \sum_{i=1}^{r}W(L_i)\ge\sum_{i=1}^{r}W(L_i'),
        \qquad r\in[m],
        \label{eq:mctcm-large-order-prefix}
    \end{equation}
    and consequently
    $\sum_{i=1}^{m}\sum_{j\in L_i}c_{ij}
        \le\sum_{i=1}^{m}\sum_{j\in L_i'}c_{ij}$.
\end{lemma}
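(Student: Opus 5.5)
The plan is to decompose the prefix sums by class and use the fact that, within each class, the rule \eqref{eq:mctcm-large-assignment} hands the first $r$ machines exactly the heaviest jobs of that class. Fix $r\in[m]$ and write
\[
\sum_{i=1}^{r}W(L_i)=\sum_{t=1}^{K}W\Bigl(\bigcup_{i=1}^{r}(L_i\cap\mathcal J_t)\Bigr),
\]
and similarly for the sets $L_i'$. Since $|L_i'\cap\mathcal J_t|=q_{i,t}$, the set $\bigcup_{i\le r}(L_i'\cap\mathcal J_t)$ is a subset of $\mathcal J_t$ of cardinality $z_t^r$. By \eqref{eq:mctcm-large-assignment}, $\bigcup_{i\le r}(L_i\cap\mathcal J_t)=\{\ell_{t,1},\ldots,\ell_{t,z_t^r}\}$, because the index ranges $(z_t^{i-1},z_t^i]$ for $i=1,\ldots,r$ telescope to $(0,z_t^r]$. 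This set consists of the $z_t^r$ jobs of largest weight in $\mathcal J_t$, since the list is sorted in non-increasing order of weight.

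The key elementary fact is that among all subsets of $\mathcal J_t$ of a fixed size $s$, the first $s$ elements of a weight-sorted list have maximum total weight. This holds because the $k$-th largest weight in any $s$-subset is at most $w_{\ell_{t,k}}$; alternatively, one can argue by a simple exchange. Applying this class by class and summing over $t$ gives \eqref{eq:mctcm-large-order-prefix} for every $r$, including $r=m$, where both sides equal $W(\mathcal J_L)$.

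For the cost consequence, I will apply Lemma~\ref{lem:mctcm-prefix-cost} with $x_i=W(L_i)$ and $y_i=W(L_i')$. The equality of total sums is the case $r=m$, since both families partition $\mathcal J_L$. This yields $\sum_i a_iW(L_i)\le\sum_i a_iW(L_i')$. Adding $\sum_{j\in\mathcal J_L}b_j$, which is identical on both sides, gives the claimed comparison of $\sum_i\sum_{j\in L_i}c_{ij}$.

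No step is a real obstacle. The only point needing care is the telescoping identification of the prefix union with the first $z_t^r$ list entries, which relies on $\boldsymbol q_i\le\boldsymbol N$ and $\boldsymbol z^m=\boldsymbol N$ so that all indices are valid.
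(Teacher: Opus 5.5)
Your proposal is correct and follows the paper's proof essentially step for step. Both compare, class by class, the first $z_t^r$ list entries with an arbitrary $z_t^r$-subset of $\mathcal J_t$, sum over $t$, and then apply Lemma~\ref{lem:mctcm-prefix-cost} using the equal totals $W(\mathcal J_L)$ and the common term $\sum_{j\in\mathcal J_L}b_j$; your telescoping of the index ranges and your justification of the maximum-weight-subset fact only make explicit details the paper states without proof.
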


\begin{proof}
    Fix $r\in[m]$ and $t\in[K]$.
    Both assignments place exactly $z_t^r$ jobs of $\mathcal J_t$ on
    $M_1,\ldots,M_r$, and \eqref{eq:mctcm-large-assignment} places the
    first $z_t^r$ jobs of the list $\ell_{t,1},\ldots,\ell_{t,N_t}$.
    These jobs have the largest total weight among all subsets of
    $\mathcal J_t$ of this size, so
    $\sum_{i=1}^{r}W(L_i\cap\mathcal J_t)
        \ge\sum_{i=1}^{r}W(L_i'\cap\mathcal J_t)$.
    Summing over $t$ gives \eqref{eq:mctcm-large-order-prefix}.
    Since $\sum_{i=1}^{m}W(L_i)=\sum_{i=1}^{m}W(L_i')=W(\mathcal J_L)$,
    Lemma~\ref{lem:mctcm-prefix-cost} with $x_i=W(L_i)$ and
    $y_i=W(L_i')$ yields
    $\sum_{i=1}^{m}a_iW(L_i)\le\sum_{i=1}^{m}a_iW(L_i')$.
    Adding $\sum_{j\in\mathcal J_L}b_j$ to both sides proves the cost
    inequality.
\end{proof}

\subsection{Assignment of small jobs}
\label{subsec:sec3-small}

For configurations $\boldsymbol q_1,\boldsymbol q_2,\ldots\in\mathcal Q$
as in Section~\ref{subsec:sec3-large}, let
$R_i=H-\bar P(\boldsymbol q_i)$.
Then $0\le R_i\le H-P(L_i)$, so $R_i$ does not exceed the capacity that
remains on $M_i$ below $H$ after the large jobs are assigned.

Let $N_S=|\mathcal J_S|$, and let $s_1,\ldots,s_{N_S}$ be the small
jobs in non-increasing order of $w_j/p_j$, with ties broken by
increasing job index.
For $h=0,\ldots,N_S$, let $P_S(h)=\sum_{r=1}^{h}p_{s_r}$, so that
$P_S(0)=0$ and $P_S(N_S)=P(\mathcal J_S)$.
The following lemma shows that a prefix of this list has the largest
total weight among all sets of small jobs with no larger total
processing time.

\begin{lemma}
    \label{lem:mctcm-density}
    Let $X=\{s_1,\ldots,s_h\}$ for some $h\in\{0,\ldots,N_S\}$.
    Every $Y\subseteq\mathcal J_S$ with $P(Y)\le P(X)$ satisfies
    $W(Y)\le W(X)$.
\end{lemma}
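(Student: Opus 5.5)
The plan is a standard fractional-knapsack exchange argument: compare $X$ and $Y$ only on their symmetric difference and use the density ordering to separate them by a threshold. Set $A=X\setminus Y$ and $B=Y\setminus X$, so that
\[
    W(X)-W(Y)=W(A)-W(B)
    \quad\text{and}\quad
    P(X)-P(Y)=P(A)-P(B)\ge0 .
\]
The hypothesis $P(Y)\le P(X)$ thus becomes $P(B)\le P(A)$, and it suffices to show $W(B)\le W(A)$.

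The key step is a threshold density $\rho$. If $h=0$, then $P(X)=0$. Since every job has $p_j>0$, this forces $Y=\varnothing$, and the claim is trivial. If $h=N_S$, then $Y\subseteq X$ and $w_j\ge0$ give $W(Y)\le W(X)$ directly. Otherwise put $\rho=w_{s_h}/p_{s_h}$. Every job of $A\subseteq X$ is some $s_r$ with $r\le h$, so it has density at least $\rho$. Every job of $B$, being a small job outside $X$, is some $s_r$ with $r>h$, so it has density at most $\rho$. Summing $w_j\ge\rho p_j$ over $A$ and $w_j\le\rho p_j$ over $B$ gives
\[
    W(B)\le\rho P(B)\le\rho P(A)\le W(A).
\]
The middle inequality uses $\rho\ge0$, which holds since $w_j\ge0$ and $p_j>0$.

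There is no real obstacle here. The only points needing care are the edge cases $h=0$ and $h=N_S$, where no threshold job is available on one side. A second point is that ties in $w_j/p_j$ are harmless, because the argument uses only the weak inequalities produced by the non-increasing order. Positivity of $p_j$ is what makes densities well defined.
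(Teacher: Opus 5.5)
Your proof is correct and follows essentially the same route as the paper's. Both pass to the symmetric difference and bound the two sides using the threshold density $\rho=w_{s_h}/p_{s_h}\ge0$. Your separate case $h=N_S$ is harmless but unnecessary, since $\rho$ is still defined there and $Y\setminus X=\varnothing$ makes the general argument go through unchanged.
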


\begin{proof}
    If $h=0$, then $P(Y)\le0$, and $Y=\varnothing$ because all
    processing times are positive.
    Let $h\ge1$ and $\rho=w_{s_h}/p_{s_h}\ge0$.
    Every job in $X\setminus Y$ has ratio $w_j/p_j\ge\rho$, and every
    job in $Y\setminus X$ has ratio $w_j/p_j\le\rho$.
    Therefore,
    \[
        \begin{aligned}
            W(X)-W(Y)
             & =W(X\setminus Y)-W(Y\setminus X)
            \ge\rho\bigl(P(X\setminus Y)-P(Y\setminus X)\bigr) \\
             & =\rho\bigl(P(X)-P(Y)\bigr)\ge0.
        \end{aligned}
        \qedhere
    \]
\end{proof}

The small jobs are assigned in list order according to cumulative
processing time bounds.
For $x\ge0$, let
\begin{equation}
    \theta(x)=\min\bigl\{h\in\{0,\ldots,N_S\}:
    P_S(h)\ge\min\{x,P(\mathcal J_S)\}\bigr\},
    \label{eq:mctcm-prefix-endpoint}
\end{equation}
which is well defined because $P_S(N_S)=P(\mathcal J_S)$.
Thus $\theta(0)=0$, $\theta(x)=N_S$ for $x\ge P(\mathcal J_S)$, and
$\theta$ is non-decreasing.
Let $\xi_0=0$ and $\xi_i=\sum_{k=1}^{i}R_k$.
Machine $M_i$ receives the small jobs
\begin{equation}
    S_i=\{s_r:\theta(\xi_{i-1})<r\le\theta(\xi_i)\},
    \qquad i\in[m].
    \label{eq:mctcm-small-assignment}
\end{equation}
Equivalently, the machines are considered in the order
$M_1,\ldots,M_m$, and $M_i$ receives the next jobs of the list until the
small jobs on $M_1,\ldots,M_i$ have total processing time at least
$\xi_i$ or the list is exhausted.
The stopping rule thus compares the cumulative load of small jobs on
$M_1,\ldots,M_i$ with $\xi_i$, rather than $P(S_i)$ with $R_i$.
In particular, $S_i$ is empty if the small jobs on
$M_1,\ldots,M_{i-1}$ already have total processing time at least
$\xi_i$.
As for large jobs, $S_i$ depends only on
$\boldsymbol q_1,\ldots,\boldsymbol q_i$.

\begin{lemma}
    \label{lem:mctcm-cumulative-filling}
    Let $R_1,\ldots,R_m\ge0$, and let $S_1,\ldots,S_m$ be given by
    \eqref{eq:mctcm-small-assignment}.
    Then $S_1,\ldots,S_m$ are pairwise disjoint, and for every
    $i\in[m]$,
    \begin{equation}
        P(S_i)\le R_i+\tau
        \quad\text{and}\quad
        \sum_{k=1}^{i}P(S_k)\ge
        \min\Bigl\{\sum_{k=1}^{i}R_k,\,P(\mathcal J_S)\Bigr\}.
        \label{eq:mctcm-cumulative-bounds}
    \end{equation}
    Moreover, if $\sum_{i=1}^{m}R_i\ge P(\mathcal J_S)$, then
    $\bigcup_{i=1}^{m}S_i=\mathcal J_S$.
\end{lemma}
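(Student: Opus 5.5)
The plan is to work entirely with the endpoints $\theta(\xi_i)$. Since $R_k\ge0$, the sequence $\xi_0\le\xi_1\le\cdots\le\xi_m$ is non-decreasing. Because $\theta$ is non-decreasing, the integers $h_i=\theta(\xi_i)$ are non-decreasing too. Hence the index intervals $(h_{i-1},h_i]$ defining $S_i$ in \eqref{eq:mctcm-small-assignment} are pairwise disjoint, possibly empty, and so the sets $S_i$ are pairwise disjoint. It also follows that $\bigcup_{k\le i}S_k=\{s_1,\ldots,s_{h_i}\}$, so $\sum_{k=1}^iP(S_k)=P_S(h_i)$.

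For the lower bound in \eqref{eq:mctcm-cumulative-bounds}, the definition \eqref{eq:mctcm-prefix-endpoint} gives $P_S(h_i)\ge\min\{\xi_i,P(\mathcal J_S)\}$ directly. The final claim follows from the same identity. If $\xi_m\ge P(\mathcal J_S)$, then $\min\{\xi_m,P(\mathcal J_S)\}=P(\mathcal J_S)$. Since $P_S(h)<P(\mathcal J_S)$ for $h<N_S$ (all processing times are positive), we get $h_m=N_S$. Therefore $\bigcup_iS_i=\mathcal J_S$.

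The main step is the upper bound $P(S_i)\le R_i+\tau$. If $S_i=\varnothing$, this is trivial. Otherwise $h_i>h_{i-1}$, so $h_i\ge1$. Put $x_i=\min\{\xi_i,P(\mathcal J_S)\}$. By minimality of $h_i$ we have $P_S(h_i-1)<x_i\le\xi_i$, and therefore
\[
P_S(h_i)=P_S(h_i-1)+p_{s_{h_i}}<\xi_i+\tau,
\]
using $p_{s_{h_i}}\le\tau$ because $s_{h_i}$ is small. For the other endpoint we need $P_S(h_{i-1})\ge\xi_{i-1}$. This holds when $\xi_{i-1}\le P(\mathcal J_S)$. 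In the remaining case $\xi_{i-1}>P(\mathcal J_S)$, we get $h_{i-1}=N_S$, which forces $h_i=N_S$ and contradicts $S_i\ne\varnothing$. Subtracting the two bounds gives
\[
P(S_i)=P_S(h_i)-P_S(h_{i-1})<\xi_i+\tau-\xi_{i-1}=R_i+\tau.
\]

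The only delicate point I expect is exactly this case distinction around the truncation by $P(\mathcal J_S)$. Everything else is bookkeeping with prefix sums.
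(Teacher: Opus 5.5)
Your proof is correct and follows essentially the paper's argument: both work with the prefix endpoints $\theta(\xi_i)$, read off the cumulative lower bound and the covering claim directly from the definition of $\theta$, and get $P(S_i)\le R_i+\tau$ from the minimality of $\theta(\xi_i)$ together with $p_{s_r}\le\tau$. The only difference is how the truncation at $P(\mathcal J_S)$ is handled: the paper uses $x_i\le y_i\le x_i+\tau$ and $x_i-x_{i-1}\le R_i$, whereas you note that $\xi_{i-1}>P(\mathcal J_S)$ forces $S_i=\varnothing$.
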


\begin{proof}
    Since $\xi_0\le\xi_1\le\cdots\le\xi_m$ and $\theta$ is
    non-decreasing, the index intervals in
    \eqref{eq:mctcm-small-assignment} are pairwise disjoint, and
    $\bigcup_{k=1}^{i}S_k=\{s_1,\ldots,s_{\theta(\xi_i)}\}$.
    For $i=0,\ldots,m$, let $x_i=\min\{\xi_i,P(\mathcal J_S)\}$ and
    $y_i=P_S(\theta(\xi_i))=\sum_{k=1}^{i}P(S_k)$.
    By the definition of $\theta$, $y_i\ge x_i$, with equality if
    $x_i\in\{0,P(\mathcal J_S)\}$.
    If $0<x_i<P(\mathcal J_S)$, the minimality of $\theta(\xi_i)$ gives
    $P_S(\theta(\xi_i)-1)<x_i$, and hence
    $y_i-x_i<p_{s_{\theta(\xi_i)}}\le\tau$.
    Thus $x_i\le y_i\le x_i+\tau$ for every $i$.
    Furthermore, $x_i-x_{i-1}\le R_i$ for $i\in[m]$: if
    $\xi_{i-1}\ge P(\mathcal J_S)$, then $x_i=x_{i-1}$; otherwise,
    $x_i-x_{i-1}\le\xi_i-\xi_{i-1}=R_i$.
    Consequently, $P(S_i)=y_i-y_{i-1}\le x_i+\tau-x_{i-1}\le R_i+\tau$
    and $\sum_{k=1}^{i}P(S_k)=y_i\ge x_i$, which proves
    \eqref{eq:mctcm-cumulative-bounds}.
    Finally, if $\xi_m\ge P(\mathcal J_S)$, then $\theta(\xi_m)=N_S$,
    and every small job is assigned.
\end{proof}

\subsection{Canonical schedules}
\label{subsec:sec3-canonical}

For a configuration sequence $\boldsymbol q_1,\ldots,\boldsymbol q_m$,
let $L_i$ and $S_i$ be given by \eqref{eq:mctcm-large-assignment} and
\eqref{eq:mctcm-small-assignment} with $R_i=H-\bar P(\boldsymbol q_i)$,
and let $B_i=L_i\cup S_i$.
We call $\sigma=(B_1,\ldots,B_m)$ the canonical schedule of the
sequence.
We first verify that $\sigma$ is a schedule.
By \eqref{eq:mctcm-rounding-bound},
$\sum_{i=1}^{m}\bar P(\boldsymbol q_i)=\bar P(\boldsymbol N)
\le(1+\delta)P(\mathcal J_L)$, and
$P(\mathcal J_L)+P(\mathcal J_S)\le mT$ by assumption.
Hence
\begin{equation}
    \begin{aligned}
        \sum_{i=1}^{m}R_i
         & =mH-\bar P(\boldsymbol N)
        \ge(1+\delta)\bigl(mT-P(\mathcal J_L)\bigr) \\
         & \ge(1+\delta)P(\mathcal J_S)\ge P(\mathcal J_S).
    \end{aligned}
    \label{eq:mctcm-total-small-capacity}
\end{equation}
By Lemma~\ref{lem:mctcm-cumulative-filling}, the sets
$S_1,\ldots,S_m$ are pairwise disjoint with union $\mathcal J_S$, so
$\sigma$ assigns every job exactly once.
Moreover, $P(L_i)\le\bar P(\boldsymbol q_i)$ and
$P(S_i)\le R_i+\tau$ give
\begin{equation}
    P(B_i)\le\bar P(\boldsymbol q_i)+R_i+\tau
    =H+\tau=(1+2\delta)T=(1+\varepsilon)T,
    \qquad i\in[m].
    \label{eq:mctcm-output-load}
\end{equation}
This bound holds for every configuration sequence, regardless of
whether $\mathcal F(T)$ is empty.
A canonical schedule need not meet the deadline $T$.

Let $\mathcal C$ denote the set of canonical schedules of all
configuration sequences.
The following lemma is the key step of the analysis: every schedule
that meets the deadline can be transformed into a canonical schedule
without increasing its total assignment cost.

\begin{lemma}
    \label{lem:mctcm-reference-transformation}
    For every $\sigma^\circ=(B_1^\circ,\ldots,B_m^\circ)\in\mathcal F(T)$,
    there exists $\sigma'=(B_1',\ldots,B_m')\in\mathcal C$ such that
    \begin{equation}
        \sum_{i=1}^{r}W(B_i')\ge\sum_{i=1}^{r}W(B_i^\circ),
        \qquad r\in[m].
        \label{eq:mctcm-reference-prefix}
    \end{equation}
    Consequently, $Z(\sigma')\le Z(\sigma^\circ)$.
\end{lemma}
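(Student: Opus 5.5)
Take $\sigma'$ to be the canonical schedule of the configuration sequence induced by $\sigma^\circ$, namely $q_{i,t}=|B_i^\circ\cap\mathcal J_t|$. As shown at the end of Section~\ref{subsec:sec3-rounding}, this is a configuration sequence, so $\sigma'\in\mathcal C$. Write $B_i'=L_i\cup S_i$, and split $B_i^\circ$ into $L_i^\circ=B_i^\circ\cap\mathcal J_L$ and $S_i^\circ=B_i^\circ\cap\mathcal J_S$. Then \eqref{eq:mctcm-reference-prefix} follows from two separate prefix inequalities, one for large jobs and one for small jobs, which we add. The cost comparison then follows from Corollary~\ref{cor:prelim-prefix-weight}, since the case $r=m$ holds with equality.

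\emph{Large jobs.} The sets $L_i^\circ$ are pairwise disjoint with union $\mathcal J_L$, and $|L_i^\circ\cap\mathcal J_t|=q_{i,t}$. Lemma~\ref{lem:mctcm-large-order}, applied with $L_i'=L_i^\circ$, therefore gives $\sum_{i\le r}W(L_i)\ge\sum_{i\le r}W(L_i^\circ)$ for every $r$.

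\emph{Small jobs.} This is the main step. Fix $r$ and set $Y=\bigcup_{i\le r}S_i^\circ$ and $X=\bigcup_{i\le r}S_i$. By the proof of Lemma~\ref{lem:mctcm-cumulative-filling}, $X=\{s_1,\ldots,s_{\theta(\xi_r)}\}$ is a prefix of the density list. By Lemma~\ref{lem:mctcm-density}, it then suffices to show $P(Y)\le P(X)$. Lemma~\ref{lem:mctcm-cumulative-filling} gives
\[
P(X)\ge\min\Bigl\{\xi_r,\,P(\mathcal J_S)\Bigr\}.
\]
If the minimum equals $P(\mathcal J_S)$, we are done, since $P(Y)\le P(\mathcal J_S)$. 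Otherwise we need $P(Y)\le\xi_r$. Because $\sigma^\circ\in\mathcal F(T)$, we have $P(S_i^\circ)\le T-P(L_i^\circ)$. By \eqref{eq:mctcm-rounding-bound}, $\bar P(\boldsymbol q_i)\le(1+\delta)P(L_i^\circ)$. Hence
\[
R_i=H-\bar P(\boldsymbol q_i)\ge(1+\delta)\bigl(T-P(L_i^\circ)\bigr)\ge T-P(L_i^\circ)\ge P(S_i^\circ).
\]
Summing over $i\le r$ gives $P(Y)\le\xi_r$, as needed. This machine-wise bound is exactly why the canonical rule uses the enlarged capacity $H$ and cumulative targets: a machine-wise stopping rule would not preserve prefix dominance.

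Adding the large-job and small-job inequalities yields \eqref{eq:mctcm-reference-prefix}.
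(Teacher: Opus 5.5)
Your proposal is correct and follows essentially the same route as the paper's proof. Both take the canonical schedule of the configuration sequence induced by $\sigma^\circ$, handle the large jobs by Lemma~\ref{lem:mctcm-large-order}, and handle the small jobs by combining the machine-wise bound $R_i\ge P(S_i^\circ)$ with Lemmas~\ref{lem:mctcm-cumulative-filling} and~\ref{lem:mctcm-density}, before concluding with Corollary~\ref{cor:prelim-prefix-weight}.
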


\begin{proof}
    Let $L_i^\circ=B_i^\circ\cap\mathcal J_L$ and
    $S_i^\circ=B_i^\circ\cap\mathcal J_S$, and let
    $q_{i,t}=|L_i^\circ\cap\mathcal J_t|$.
    By Section~\ref{subsec:sec3-rounding},
    $\boldsymbol q_1,\ldots,\boldsymbol q_m$ is a configuration
    sequence.
    Let $\sigma'$ be its canonical schedule, with
    $B_i'=L_i'\cup S_i'$.
    By Lemma~\ref{lem:mctcm-large-order},
    $\sum_{i=1}^{r}W(L_i')\ge\sum_{i=1}^{r}W(L_i^\circ)$ for every
    $r\in[m]$.

    For the small jobs, \eqref{eq:mctcm-rounding-bound} and
    $P(L_i^\circ)+P(S_i^\circ)\le T$ give
    \[
        R_i=H-\bar P(\boldsymbol q_i)
        \ge(1+\delta)\bigl(T-P(L_i^\circ)\bigr)
        \ge(1+\delta)P(S_i^\circ)\ge P(S_i^\circ),
        \qquad i\in[m].
    \]
    Fix $r\in[m]$, and let $X=\bigcup_{i=1}^{r}S_i'$ and
    $Y=\bigcup_{i=1}^{r}S_i^\circ$.
    Then $X=\{s_1,\ldots,s_{\theta(\xi_r)}\}$ is a prefix of the
    small-job list.
    The inequalities above give $P(Y)\le\xi_r$, and $P(Y)\le
    P(\mathcal J_S)$ holds trivially; together with
    Lemma~\ref{lem:mctcm-cumulative-filling}, this yields
    $P(Y)\le\min\{\xi_r,P(\mathcal J_S)\}\le P(X)$.
    Lemma~\ref{lem:mctcm-density} now gives $W(X)\ge W(Y)$, that is,
    $\sum_{i=1}^{r}W(S_i')\ge\sum_{i=1}^{r}W(S_i^\circ)$.
    Adding the inequalities for large and small jobs gives
    \eqref{eq:mctcm-reference-prefix}, and
    Corollary~\ref{cor:prelim-prefix-weight} gives
    $Z(\sigma')\le Z(\sigma^\circ)$.
\end{proof}

Applying Lemma~\ref{lem:mctcm-reference-transformation} to a schedule
that attains $Z^*(T)$ shows that, whenever $\mathcal F(T)\ne\varnothing$,
\begin{equation}
    \mathcal C\ne\varnothing
    \quad\text{and}\quad
    \min_{\sigma\in\mathcal C}Z(\sigma)\le Z^*(T).
    \label{eq:mctcm-canonical-optimum}
\end{equation}
In view of \eqref{eq:mctcm-output-load}, it therefore suffices to find
a canonical schedule of minimum total assignment cost.

\subsection{Dynamic programming}
\label{subsec:sec3-dp}

The dynamic program chooses the configurations of $M_1,\ldots,M_m$ in
this order.
Let $\mathcal Z=\{\boldsymbol z\in\mathbb Z_{\ge0}^K:
\boldsymbol z\le\boldsymbol N\}$.
A state $(i,\boldsymbol z)$ with $i\in\{0,\ldots,m\}$ and
$\boldsymbol z\in\mathcal Z$ represents the partial schedules in which
the jobs on $M_1,\ldots,M_i$ are assigned by
\eqref{eq:mctcm-large-assignment} and
\eqref{eq:mctcm-small-assignment} for some configurations
$\boldsymbol q_1,\ldots,\boldsymbol q_i\in\mathcal Q$ with
$\sum_{k=1}^{i}\boldsymbol q_k=\boldsymbol z$, and no job is assigned
to $M_{i+1},\ldots,M_m$.

All partial schedules represented by the same state assign the same
set of jobs.
By \eqref{eq:mctcm-large-assignment}, their large jobs are the first
$z_t$ jobs of each list $\ell_{t,1},\ldots,\ell_{t,N_t}$.
For the small jobs, let $A_i(\boldsymbol z)=iH-\bar P(\boldsymbol z)$.
For configurations with sum $\boldsymbol z$,
\[
    \xi_i=\sum_{k=1}^{i}\bigl(H-\bar P(\boldsymbol q_k)\bigr)
    =iH-\bar P(\boldsymbol z)=A_i(\boldsymbol z),
\]
which depends only on $i$ and $\boldsymbol z$.
Since $\xi_i\ge0$, such configurations can exist only if
$A_i(\boldsymbol z)\ge0$.
In this case, by \eqref{eq:mctcm-small-assignment}, the machines
$M_1,\ldots,M_i$ receive exactly the first
\begin{equation}
    h_i(\boldsymbol z)=\theta\bigl(A_i(\boldsymbol z)\bigr)
    \label{eq:mctcm-small-endpoint}
\end{equation}
small jobs of the list.
The machine index and the cumulative numbers of large jobs from each
class thus determine the assigned small jobs, and the dynamic program
requires no state component for small jobs.

By \eqref{eq:prelim-cost}, the total assignment cost of a schedule
equals $\sum_{i=1}^{m}a_iW(B_i)+C_0$, where $C_0=\sum_{j}b_j$ is the
same for all schedules.
For a state $(i,\boldsymbol z)$, let $D_i(\boldsymbol z)$ be the
minimum of $\sum_{k=1}^{i}a_kW(B_k)$ over the partial schedules
represented by the state, with $D_i(\boldsymbol z)=+\infty$ if the
state represents no partial schedule.

Let $(i-1,\boldsymbol y)$ be a state with
$D_{i-1}(\boldsymbol y)<+\infty$, and let $\boldsymbol q\in\mathcal Q$
with $\boldsymbol y+\boldsymbol q\le\boldsymbol N$.
Then $A_{i-1}(\boldsymbol y)\ge0$ and
$A_i(\boldsymbol y+\boldsymbol q)
    =A_{i-1}(\boldsymbol y)+H-\bar P(\boldsymbol q)
    \ge A_{i-1}(\boldsymbol y)$,
so $h_i(\boldsymbol y+\boldsymbol q)\ge h_{i-1}(\boldsymbol y)$.
Choosing $\boldsymbol q$ as the configuration of $M_i$ assigns to $M_i$
the next $q_t$ jobs of each class list and the small jobs in positions
$h_{i-1}(\boldsymbol y)+1,\ldots,h_i(\boldsymbol y+\boldsymbol q)$,
that is, the set
\begin{equation}
    \begin{aligned}
        G_i(\boldsymbol y,\boldsymbol q)
        = & \bigcup_{t=1}^{K}\{\ell_{t,r}:y_t<r\le y_t+q_t\}       \\
          & \cup\{s_r:h_{i-1}(\boldsymbol y)<r
        \le h_i(\boldsymbol y+\boldsymbol q)\}.
    \end{aligned}
    \label{eq:mctcm-new-jobs}
\end{equation}
With the prefix sums $F_t(r)=\sum_{v=1}^{r}w_{\ell_{t,v}}$ for
$r=0,\ldots,N_t$ and $F_S(h)=\sum_{v=1}^{h}w_{s_v}$ for
$h=0,\ldots,N_S$, the weight of this set is
\begin{equation}
    \begin{aligned}
        W\bigl(G_i(\boldsymbol y,\boldsymbol q)\bigr)
        = & \sum_{t=1}^{K}\bigl[F_t(y_t+q_t)-F_t(y_t)\bigr] \\
          & +F_S\bigl(h_i(\boldsymbol y+\boldsymbol q)\bigr)
        -F_S\bigl(h_{i-1}(\boldsymbol y)\bigr),
    \end{aligned}
    \label{eq:mctcm-transition-cost-prefix}
\end{equation}
which can be evaluated in $O(K)$ time.

Set $D_0(\boldsymbol0)=0$ and $D_0(\boldsymbol z)=+\infty$ for
$\boldsymbol z\ne\boldsymbol0$.
For $i=1,\ldots,m$ and $\boldsymbol z\in\mathcal Z$, set
$D_i(\boldsymbol z)=+\infty$ if $A_i(\boldsymbol z)<0$, and otherwise
\begin{equation}
    D_i(\boldsymbol z)
    =\min_{\substack{
            \boldsymbol q\in\mathcal Q,\ \boldsymbol q\le\boldsymbol z\\
            D_{i-1}(\boldsymbol z-\boldsymbol q)<+\infty}}
    \Bigl\{
    D_{i-1}(\boldsymbol z-\boldsymbol q)
    +a_iW\bigl(G_i(\boldsymbol z-\boldsymbol q,\boldsymbol q)\bigr)
    \Bigr\}.
    \label{eq:mctcm-dp-recurrence}
\end{equation}
The condition $D_{i-1}(\boldsymbol z-\boldsymbol q)<+\infty$ ensures
that $h_{i-1}(\boldsymbol z-\boldsymbol q)$ is defined, so every term
in \eqref{eq:mctcm-dp-recurrence} is well defined.
For each finite $D_i(\boldsymbol z)$, a minimizing configuration is
recorded as $\operatorname{Prev}(i,\boldsymbol z)$, with ties broken by
a fixed enumeration order of $\mathcal Q$.

By \eqref{eq:mctcm-total-small-capacity},
$A_m(\boldsymbol N)\ge P(\mathcal J_S)$, and hence
$h_m(\boldsymbol N)=N_S$.
If $D_m(\boldsymbol N)<+\infty$, a schedule
$\widehat\sigma=(\widehat B_1,\ldots,\widehat B_m)$ is recovered as
follows: starting from $\boldsymbol z=\boldsymbol N$, for
$i=m,m-1,\ldots,1$, let
$\boldsymbol q=\operatorname{Prev}(i,\boldsymbol z)$, set
$\widehat B_i=G_i(\boldsymbol z-\boldsymbol q,\boldsymbol q)$, and
replace $\boldsymbol z$ by $\boldsymbol z-\boldsymbol q$.

\begin{lemma}
    \label{lem:mctcm-dp-optimality}
    For all $i\in\{0,\ldots,m\}$ and $\boldsymbol z\in\mathcal Z$, the
    value $D_i(\boldsymbol z)$ computed by
    \eqref{eq:mctcm-dp-recurrence} equals the minimum of
    $\sum_{k=1}^{i}a_kW(B_k)$ over the partial schedules represented by
    $(i,\boldsymbol z)$, and $D_i(\boldsymbol z)=+\infty$ if the state
    represents no partial schedule.
    In particular, $D_m(\boldsymbol N)+C_0=\min_{\sigma\in\mathcal C}
    Z(\sigma)$.
    If $D_m(\boldsymbol N)<+\infty$, the recovered schedule
    $\widehat\sigma$ belongs to $\mathcal C$ and satisfies
    $Z(\widehat\sigma)=D_m(\boldsymbol N)+C_0$.
\end{lemma}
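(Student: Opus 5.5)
The plan is induction on $i$, proving the stronger statement that the recurrence value $D_i(\boldsymbol z)$ equals the true minimum over represented partial schedules. Along the way we record that every represented partial schedule assigns exactly the large jobs $\{\ell_{t,r}:r\le z_t\}$ and the small jobs $s_1,\ldots,s_{h_i(\boldsymbol z)}$. For $i=0$ the only represented partial schedule is the empty one, with value $0$, and it corresponds to $\boldsymbol z=\boldsymbol 0$. This matches the initialization.

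For the inductive step, fix $(i,\boldsymbol z)$ and first treat the case $A_i(\boldsymbol z)<0$. Any configurations with sum $\boldsymbol z$ would give $\xi_i=A_i(\boldsymbol z)<0$, which is impossible since each $R_k\ge0$. So the state is empty, and the recurrence sets $+\infty$.

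Otherwise, take a represented partial schedule generated by $\boldsymbol q_1,\ldots,\boldsymbol q_i$, and put $\boldsymbol q=\boldsymbol q_i$ and $\boldsymbol y=\boldsymbol z-\boldsymbol q$. Since $L_k$ and $S_k$ depend only on $\boldsymbol q_1,\ldots,\boldsymbol q_k$, the restriction to $M_1,\ldots,M_{i-1}$ is a partial schedule represented by $(i-1,\boldsymbol y)$. Hence $D_{i-1}(\boldsymbol y)<+\infty$ by induction. The set on $M_i$ is exactly $G_i(\boldsymbol y,\boldsymbol q)$, because:
\begin{itemize}
\item its large jobs are $\ell_{t,r}$ with $y_t<r\le z_t$ by \eqref{eq:mctcm-large-assignment};
\item its small jobs are $s_r$ with $\theta(\xi_{i-1})<r\le\theta(\xi_i)$, where $\xi_{i-1}=A_{i-1}(\boldsymbol y)$ and $\xi_i=A_i(\boldsymbol z)$, as computed before \eqref{eq:mctcm-small-endpoint}.
\end{itemize}
Its cost therefore equals the cost of the prefix plus $a_iW(G_i(\boldsymbol y,\boldsymbol q))$. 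This is at least $D_{i-1}(\boldsymbol y)+a_iW(G_i(\boldsymbol y,\boldsymbol q))$, which is a term of \eqref{eq:mctcm-dp-recurrence}, so the true minimum is at least the computed value.

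Conversely, for any admissible $\boldsymbol q$ in the recurrence, take a minimizing partial schedule of $(i-1,\boldsymbol y)$ and append $\boldsymbol q_i=\boldsymbol q$. This yields a represented partial schedule of $(i,\boldsymbol z)$ with exactly that value. The computed value is therefore attained, and when no admissible $\boldsymbol q$ exists both sides are $+\infty$.

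For $i=m$ and $\boldsymbol z=\boldsymbol N$, the sequences $\boldsymbol q_1,\ldots,\boldsymbol q_m\in\mathcal Q$ summing to $\boldsymbol N$ are exactly the configuration sequences. The represented partial schedules are their canonical schedules, which are full schedules by \eqref{eq:mctcm-total-small-capacity} and Lemma~\ref{lem:mctcm-cumulative-filling}. Adding $C_0$ via \eqref{eq:prelim-cost} gives $D_m(\boldsymbol N)+C_0=\min_{\sigma\in\mathcal C}Z(\sigma)$.

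For the recovery, each recorded $\operatorname{Prev}(i,\boldsymbol z)=\boldsymbol q$ satisfies $D_i(\boldsymbol z)=D_{i-1}(\boldsymbol z-\boldsymbol q)+a_iW(G_i(\boldsymbol z-\boldsymbol q,\boldsymbol q))$ with a finite predecessor. Backtracking from $(m,\boldsymbol N)$ reaches $(0,\boldsymbol 0)$, because only $D_0(\boldsymbol 0)$ is finite at level $0$. The recovered configurations therefore sum to $\boldsymbol N$ and form a configuration sequence. By the identification of $G_i$ above, the sets $\widehat B_i$ form its canonical schedule. Telescoping the recorded equalities gives $\sum_i a_iW(\widehat B_i)=D_m(\boldsymbol N)$.

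The main obstacle is purely one of bookkeeping: checking that the small-job block of $G_i$ coincides with $S_i$. This requires using $\xi_{i-1}=A_{i-1}(\boldsymbol y)$ for every predecessor, not just the optimal one, which is exactly why the small-job endpoint depends only on the state.
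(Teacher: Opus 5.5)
Your proposal is correct and follows essentially the same route as the paper: induction on $i$, dismissing the case $A_i(\boldsymbol z)<0$ via $R_k\ge0$, then the two directions (each recurrence term is attained by extending a minimizing partial schedule of $(i-1,\boldsymbol y)$, and each represented partial schedule restricts to $(i-1,\boldsymbol y)$ with $M_i$ receiving exactly $G_i(\boldsymbol y,\boldsymbol q)$), followed by identifying $(m,\boldsymbol N)$ with $\mathcal C$ and backtracking through finite states to $(0,\boldsymbol 0)$. The only difference is cosmetic: you state explicitly that $\xi_{i-1}=A_{i-1}(\boldsymbol y)$ holds for every predecessor, which the paper uses implicitly when it says that $L_k$ and $S_k$ depend only on $\boldsymbol q_1,\ldots,\boldsymbol q_k$.
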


\begin{proof}
    We proceed by induction on $i$.
    For $i=0$, the only represented partial schedule is the empty one,
    with count vector $\boldsymbol0$ and cost zero.
    Let $i\ge1$, and assume that the claim holds for $i-1$.
    If $A_i(\boldsymbol z)<0$, no configurations
    $\boldsymbol q_1,\ldots,\boldsymbol q_i\in\mathcal Q$ sum to
    $\boldsymbol z$, because any such configurations would satisfy
    $\bar P(\boldsymbol z)=\sum_{k=1}^{i}\bar P(\boldsymbol q_k)\le iH$;
    hence $D_i(\boldsymbol z)=+\infty$ is correct.
    Let $A_i(\boldsymbol z)\ge0$.

    Consider the term of \eqref{eq:mctcm-dp-recurrence} for a
    configuration $\boldsymbol q$, and let
    $\boldsymbol y=\boldsymbol z-\boldsymbol q$.
    By the induction hypothesis, some partial schedule represented by
    $(i-1,\boldsymbol y)$ has cost $D_{i-1}(\boldsymbol y)$; it assigns
    the first $y_t$ jobs of each class list and the first
    $h_{i-1}(\boldsymbol y)$ small jobs.
    Assigning $G_i(\boldsymbol y,\boldsymbol q)$ to $M_i$ extends it in
    accordance with both assignment rules, so the result is a partial
    schedule represented by $(i,\boldsymbol z)$ whose cost equals the
    term.
    Conversely, consider a partial schedule represented by
    $(i,\boldsymbol z)$ and generated by
    $\boldsymbol q_1,\ldots,\boldsymbol q_i$, and let
    $\boldsymbol q=\boldsymbol q_i$ and
    $\boldsymbol y=\boldsymbol z-\boldsymbol q$.
    Since $L_k$ and $S_k$ depend only on
    $\boldsymbol q_1,\ldots,\boldsymbol q_k$, its restriction to
    $M_1,\ldots,M_{i-1}$ is represented by $(i-1,\boldsymbol y)$ and,
    by the induction hypothesis, has cost at least
    $D_{i-1}(\boldsymbol y)<+\infty$.
    Machine $M_i$ receives exactly $G_i(\boldsymbol y,\boldsymbol q)$,
    so the cost of the partial schedule is at least the term of
    \eqref{eq:mctcm-dp-recurrence} for $\boldsymbol q$.
    The two directions prove the claim for $(i,\boldsymbol z)$; in
    particular, the minimum in \eqref{eq:mctcm-dp-recurrence} is taken
    over an empty set exactly when the state represents no partial
    schedule.

    The partial schedules represented by $(m,\boldsymbol N)$ are
    exactly the canonical schedules, because configurations with sum
    $\boldsymbol N$ form configuration sequences and
    $h_m(\boldsymbol N)=N_S$.
    Together with \eqref{eq:prelim-cost}, this gives
    $D_m(\boldsymbol N)+C_0=\min_{\sigma\in\mathcal C}Z(\sigma)$.
    If $D_m(\boldsymbol N)<+\infty$, each recorded configuration
    attains the minimum in \eqref{eq:mctcm-dp-recurrence} at a state
    with finite value, so the recovery passes only through such states
    and ends at $(0,\boldsymbol0)$, the only state of layer $0$ with
    finite value.
    The recorded configurations therefore form a configuration sequence
    whose canonical schedule is $\widehat\sigma$, and the costs of the
    corresponding transitions sum to $D_m(\boldsymbol N)$.
\end{proof}

\subsection{The algorithm and its analysis}
\label{subsec:sec3-algorithm}

The complete algorithm is summarized below.

\begin{description}
    \item[Algorithm $\mathcal A_1$.]
    \item[Input:] An instance $(I,T)$ of
          $P|C_{\max}\le T|\operatorname{TMC}$ with
          $0\le a_1\le\cdots\le a_m$, and an accuracy parameter
          $0<\varepsilon\le1$.
    \item[Output:] A schedule $\widehat\sigma$ together with
          $\widehat C=C_{\max}(\widehat\sigma)$ and
          $\widehat Z=Z(\widehat\sigma)$, or $\mathrm{FAIL}$.
    \item[Step 1.] If $p_{\max}>T$ or $P(\mathcal J)>mT$, return
          $\mathrm{FAIL}$.
    \item[Step 2.] Set $\delta=\varepsilon/2$, $\tau=\delta T$, and
          $H=(1+\delta)T$.
          Determine $\mathcal J_L$, $\mathcal J_S$, the values
          $\pi_0,\ldots,\pi_K$, the classes
          $\mathcal J_1,\ldots,\mathcal J_K$ in
          \eqref{eq:mctcm-job-classes}, and $\boldsymbol N$.
    \item[Step 3.] Sort each nonempty class $\mathcal J_t$ in
          non-increasing order of $w_j$ and $\mathcal J_S$ in
          non-increasing order of $w_j/p_j$, breaking ties by
          increasing job index.
          Compute the prefix sums $P_S$, $F_S$, and
          $F_1,\ldots,F_K$.
    \item[Step 4.] Construct $\mathcal Q$ by enumerating all
          $\boldsymbol q\in\mathbb Z_{\ge0}^K$ with
          $\sum_{t}q_t\le\kappa$ and retaining those with
          $\boldsymbol q\le\boldsymbol N$ and
          $\bar P(\boldsymbol q)\le H$.
          For all $i\in\{0,\ldots,m\}$ and $\boldsymbol z\in\mathcal Z$
          with $A_i(\boldsymbol z)\ge0$, compute $h_i(\boldsymbol z)$
          by binary search on $P_S(0),\ldots,P_S(N_S)$.
    \item[Step 5.] Compute $D_i(\boldsymbol z)$ and
          $\operatorname{Prev}(i,\boldsymbol z)$ for
          $i=0,\ldots,m$ and $\boldsymbol z\in\mathcal Z$ by
          \eqref{eq:mctcm-dp-recurrence}, evaluating the transition
          weights by \eqref{eq:mctcm-transition-cost-prefix}.
    \item[Step 6.] If $D_m(\boldsymbol N)=+\infty$, return
          $\mathrm{FAIL}$.
          Otherwise, recover $\widehat\sigma$ from the records
          $\operatorname{Prev}$, and return $\widehat\sigma$,
          $\widehat C=\max_{i\in[m]}P(\widehat B_i)$ computed with the
          original processing times, and
          $\widehat Z=D_m(\boldsymbol N)+C_0$.
\end{description}

\begin{theorem}
    \label{thm:problem1-guarantee}
    For every $0<\varepsilon\le1$, algorithm $\mathcal A_1$ is a
    $(1+\varepsilon,1)$-approximation algorithm for
    $P|C_{\max}\le T|\operatorname{TMC}$.
    More precisely, every schedule returned by $\mathcal A_1$ satisfies
    $C_{\max}(\widehat\sigma)\le(1+\varepsilon)T$, even if
    $\mathcal F(T)=\varnothing$; if $\mathcal F(T)\ne\varnothing$, the
    algorithm returns a schedule with $Z(\widehat\sigma)\le Z^*(T)$;
    and if it returns $\mathrm{FAIL}$, then $\mathcal F(T)=\varnothing$.
    The running time of $\mathcal A_1$ is
    \[
        O\Bigl(n\log(n+1)
        +(1+m/\varepsilon)^{O(\varepsilon^{-1}\log(2/\varepsilon))}\Bigr).
    \]
\end{theorem}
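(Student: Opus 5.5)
The proof assembles the lemmas of this section and then bounds the running time. I would treat correctness and running time separately.

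\emph{Correctness.} If Step~1 returns FAIL, then $p_{\max}>T$ or $P(\mathcal J)>mT$, and Section~\ref{sec:preliminaries} shows $\mathcal F(T)=\varnothing$. Otherwise the assumptions of Section~\ref{sec3} hold. Suppose Step~6 returns $\widehat\sigma$. Lemma~\ref{lem:mctcm-dp-optimality} gives $\widehat\sigma\in\mathcal C$ and $\widehat Z=Z(\widehat\sigma)=\min_{\sigma\in\mathcal C}Z(\sigma)$. Since every canonical schedule satisfies \eqref{eq:mctcm-output-load}, we get $C_{\max}(\widehat\sigma)\le(1+\varepsilon)T$. This bound does not use $\mathcal F(T)\ne\varnothing$. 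Now suppose $\mathcal F(T)\ne\varnothing$. Then \eqref{eq:mctcm-canonical-optimum} gives $\mathcal C\ne\varnothing$, so $D_m(\boldsymbol N)<+\infty$ by Lemma~\ref{lem:mctcm-dp-optimality}. Hence Step~6 does not fail, and $Z(\widehat\sigma)=\min_{\mathcal C}Z\le Z^*(T)$. Taking the contrapositive, a FAIL in Step~6 implies $\mathcal F(T)=\varnothing$. Together these give the $(1+\varepsilon,1)$ guarantee.

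\emph{Running time.} Steps~1--3 take $O(n\log(n+1))$ for sorting, plus $O(nK)$ or $O(n\log K)$ for classifying the large jobs; the latter can be absorbed. For Step~4, the bound $|\mathcal J_L|<m/\delta$ gives $N_t<2m/\varepsilon$. Hence
\[
|\mathcal Z|=\prod_t(N_t+1)\le(1+2m/\varepsilon)^K.
\]
Also $|\mathcal Q|\le(\kappa+1)^K$ with $\kappa+1\le 3/\varepsilon+1$. Enumerating $\mathcal Q$ costs $O(K(\kappa+1)^K)$. Computing the $h_i$ costs $O((m+1)|\mathcal Z|(K+\log(n+1)))$. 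The DP in Step~5 costs $O((m+1)|\mathcal Z||\mathcal Q|K)$, and recovery costs $O(mK+n)$.

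It remains to show that all these factors are bounded by $(1+m/\varepsilon)^{O(\varepsilon^{-1}\log(2/\varepsilon))}$. Use
\[
K=\lceil\log(1/\delta)/\log(1+\delta)\rceil\le(1+\delta)\delta^{-1}\log(2/\varepsilon)+1=O(\varepsilon^{-1}\log(2/\varepsilon)).
\]
Then $(1+2m/\varepsilon)^K\le(1+m/\varepsilon)^{2K}$. Since $m\ge1$, we have $1/\varepsilon\le1+m/\varepsilon$, so the base $\kappa+1$ of the $\mathcal Q$-bound is at most $4(1+m/\varepsilon)$ and is dominated likewise. The factors $m+1$ and $K$ are bounded by $(1+m/\varepsilon)^{O(1)}$.

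The one factor needing care is the $\log(n+1)$ in the binary searches, since it must not be multiplied by the exponential term. I would avoid binary search and precompute $\theta$ via a single merge. Alternatively, note that $n\log(n+1)$ plus $X\log(n+1)$ is at most $O(n\log(n+1)+X^2)$ by AM--GM style splitting, because $X\log n\le X^2+n$. With $X=(1+m/\varepsilon)^{O(K)}$, this bound suffices. This bookkeeping is the main obstacle. The $O(\varepsilon^{-1}\log(2/\varepsilon))$ exponent absorbs the constant factors and the $\log(2/\varepsilon)$ rather than $\log(1/\varepsilon)$ form handles $\varepsilon=1$, where $\log(1/\varepsilon)=0$ but $K\ge1$.
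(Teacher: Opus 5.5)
Your proposal is correct and follows the paper's proof essentially step for step. Correctness comes from Lemma~\ref{lem:mctcm-dp-optimality} together with \eqref{eq:mctcm-output-load} and \eqref{eq:mctcm-canonical-optimum}, and the running time from the same term-by-term count, with the $mV\log(n+1)$ term absorbed by the same AM--GM splitting; your alternative bound on $\kappa+1$ and the optional merge in place of binary search are only cosmetic variations.
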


\begin{proof}
    If Step~1 returns $\mathrm{FAIL}$, then $\mathcal F(T)=\varnothing$
    by Section~\ref{sec:preliminaries}.
    Otherwise, Steps~2--5 carry out the constructions of
    Sections~\ref{subsec:sec3-rounding}--\ref{subsec:sec3-dp}.
    By Lemma~\ref{lem:mctcm-dp-optimality}, Step~6 returns
    $\mathrm{FAIL}$ exactly when $\mathcal C=\varnothing$, which by
    \eqref{eq:mctcm-canonical-optimum} implies
    $\mathcal F(T)=\varnothing$.
    Otherwise, it returns a schedule $\widehat\sigma\in\mathcal C$ of
    minimum total assignment cost over $\mathcal C$, together with its
    makespan and its cost.
    By \eqref{eq:mctcm-output-load},
    $C_{\max}(\widehat\sigma)\le(1+\varepsilon)T$.
    If $\mathcal F(T)\ne\varnothing$,
    \eqref{eq:mctcm-canonical-optimum} gives
    $Z(\widehat\sigma)\le Z^*(T)$.
    Hence $\mathcal A_1$ is a $(1+\varepsilon,1)$-approximation
    algorithm.

    For the running time, let $V=|\mathcal Z|=\prod_{t=1}^{K}(N_t+1)$
    and $Q=|\mathcal Q|$.
    Since $V\ge1+|\mathcal J_L|$, Steps~1--3 take
    $O(n\log(n+1)+KV)$ time, including the classification of the large
    jobs by binary search over $\pi_0,\ldots,\pi_K$.
    In Step~4, the enumeration takes $O(K\binom{K+\kappa}{K})$ time,
    the values $\bar P(\boldsymbol z)$ take $O(KV)$ time, and the
    binary searches take $O(mV(K+\log(n+1)))$ time.
    Step~5 considers at most $mVQ$ pairs of a state and a configuration,
    each in $O(K)$ time, and Step~6 takes $O(mK+n)$ time.
    The total running time is therefore
    \begin{equation}
        O\Bigl(n\log(n+1)+K\tbinom{K+\kappa}{K}
        +mV\bigl(KQ+\log(n+1)\bigr)\Bigr).
        \label{eq:mctcm-detailed-time}
    \end{equation}
    Since $N_t\le|\mathcal J_L|<m/\delta$, we have
    $V\le(1+m/\delta)^K$.
    Moreover, $Q\le\binom{K+\kappa}{K}\le(\kappa+1)^K$ with
    $\kappa+1\le2+1/\delta\le(1+m/\delta)^2$, $K\le2^K$, and
    $m\le1+m/\delta$.
    Finally, $mV\log(n+1)\le(mV)^2+\log^2(n+1)$ and
    $\log^2(n+1)\le n\log(n+1)$.
    Hence all terms of \eqref{eq:mctcm-detailed-time} other than
    $n\log(n+1)$ are bounded by $(1+m/\delta)^{O(K)}$.
    Substituting $\delta=\varepsilon/2$ and using
    $1+2m/\varepsilon\le(1+m/\varepsilon)^2$ and
    $K=O(\varepsilon^{-1}\log(2/\varepsilon))$ yields the stated bound.
\end{proof}

\section{A PTAS for $P|\operatorname{TMC}\le\widehat U|C_{\max}$}
\label{sec:ptas-budget}

Throughout this section and the next, $\mathcal A_1(I,T,\eta)$ denotes
a call of algorithm $\mathcal A_1$ on the instance $(I,T)$ with
accuracy parameter $\eta$.
By Theorem~\ref{thm:problem1-guarantee}, every schedule returned by
such a call has makespan at most $(1+\eta)T$, even if
$\mathcal F(T)=\varnothing$, and the call returns a schedule with total
assignment cost at most $Z^*(T)$ whenever $\mathcal F(T)\ne\varnothing$.

By Section~\ref{sec:preliminaries}, a schedule satisfying the budget
exists if and only if $a_1W(\mathcal J)+C_0\le\widehat U$, and in this
case $L\le C^*(\widehat U)\le P(\mathcal J)$, where
$L=\max\{p_{\max},P(\mathcal J)/m\}$.
The algorithm is based on the following observation.
If $T\ge C^*(\widehat U)$, then a schedule attaining $C^*(\widehat U)$
belongs to $\mathcal F(T)$ and has total assignment cost at most
$\widehat U$, so that
\begin{equation}
    \mathcal F(T)\ne\varnothing
    \quad\text{and}\quad
    Z^*(T)\le\widehat U,
    \qquad T\ge C^*(\widehat U).
    \label{eq:budget-deadline}
\end{equation}
Hence $\mathcal A_1(I,T,\eta)$ returns a schedule that satisfies the
budget whenever $T\ge C^*(\widehat U)$.
Since $C^*(\widehat U)$ is not known, the algorithm tries the deadlines
of a geometric sequence in increasing order and accepts the first
returned schedule that satisfies the budget.

\begin{description}
    \item[Algorithm $\mathcal A_2$.]
    \item[Input:] An instance $(I,\widehat U)$ of
          $P|\operatorname{TMC}\le\widehat U|C_{\max}$ with
          $0\le a_1\le\cdots\le a_m$, and an accuracy parameter
          $0<\varepsilon\le1$.
    \item[Output:] A schedule $\widehat\sigma$ together with
          $\widehat C=C_{\max}(\widehat\sigma)$ and
          $\widehat Z=Z(\widehat\sigma)$, or a report that no schedule
          satisfies the budget.
    \item[Step 1.] If $a_1W(\mathcal J)+C_0>\widehat U$, report that no
          schedule satisfies the budget and stop.
    \item[Step 2.] Set $\eta=\varepsilon/4$ and
          $L=\max\{p_{\max},P(\mathcal J)/m\}$.
          Let $T_r=(1+\eta)^rL$ for $r=0,\ldots,r_{\max}$, where
          $r_{\max}$ is the smallest nonnegative integer with
          $T_{r_{\max}}\ge P(\mathcal J)$.
    \item[Step 3.] For $r=0,1,\ldots,r_{\max}$, call
          $\mathcal A_1(I,T_r,\eta)$.
          If the call returns a schedule $\sigma^r$ with
          $Z(\sigma^r)\le\widehat U$, return $\sigma^r$ together with
          its makespan and total assignment cost, and stop.
\end{description}

\begin{theorem}
    \label{thm:problem2-guarantee}
    For every $0<\varepsilon\le1$, algorithm $\mathcal A_2$ is a
    $(1,1+\varepsilon)$-approximation algorithm for
    $P|\operatorname{TMC}\le\widehat U|C_{\max}$ with running time
    \[
        O\Bigl(\bigl[1+\varepsilon^{-1}\log(\min\{m,n\})\bigr]
        n\log(n+1)
        +(1+m/\varepsilon)^{O(\varepsilon^{-1}\log(2/\varepsilon))}\Bigr).
    \]
    Consequently, the algorithms $\mathcal A_2$ with
    $0<\varepsilon\le1$ form a PTAS for this problem.
\end{theorem}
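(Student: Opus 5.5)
The proof splits into correctness of the report, feasibility and quality of the returned schedule, and the running time.

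\emph{Correctness of the report.} By Section~\ref{sec:preliminaries}, a schedule satisfying the budget exists if and only if $a_1W(\mathcal J)+C_0\le\widehat U$. Hence Step~1 reports infeasibility exactly when $C^*(\widehat U)=+\infty$.

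Now assume $C^*(\widehat U)<+\infty$, so that $L\le C^*(\widehat U)\le P(\mathcal J)$. Let $r^*$ be the smallest index with $T_{r^*}\ge C^*(\widehat U)$. It exists because $T_{r_{\max}}\ge P(\mathcal J)$.

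\emph{Quality of the returned schedule.} By \eqref{eq:budget-deadline}, $\mathcal F(T_{r^*})\ne\varnothing$ and $Z^*(T_{r^*})\le\widehat U$. Theorem~\ref{thm:problem1-guarantee} then shows that the call $\mathcal A_1(I,T_{r^*},\eta)$ returns a schedule with cost at most $\widehat U$. So Step~3 stops at some index $r\le r^*$, and the returned schedule satisfies the budget, which gives $\alpha=1$.

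Its makespan is at most $(1+\eta)T_r\le(1+\eta)T_{r^*}$; this bound holds even if $\mathcal F(T_r)=\varnothing$, by the unconditional makespan bound of Theorem~\ref{thm:problem1-guarantee}. To bound $T_{r^*}$, I split into two cases:
\begin{itemize}
\item If $r^*=0$, then $T_{r^*}=L\le C^*(\widehat U)$.
\item If $r^*\ge1$, minimality gives $T_{r^*-1}<C^*(\widehat U)$, so $T_{r^*}<(1+\eta)C^*(\widehat U)$.
\end{itemize}
In both cases the makespan is at most $(1+\eta)^2C^*(\widehat U)$. Since $\eta=\varepsilon/4$ and $\varepsilon\le1$,
\[
(1+\eta)^2=1+\varepsilon/2+\varepsilon^2/16\le1+\varepsilon .
\]

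\emph{Running time.} Here I need to bound $r_{\max}$. Since $P(\mathcal J)/L\le\min\{m,n\}$,
\[
r_{\max}\le\bigl\lceil\log(\min\{m,n\})/\log(1+\eta)\bigr\rceil=O\bigl(1+\varepsilon^{-1}\log\min\{m,n\}\bigr),
\]
using $\log(1+\eta)\ge\eta/2$ for $\eta\le1$.

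Steps~1--2 take $O(n)$ time. Each call of $\mathcal A_1$ costs $O\bigl(n\log(n+1)+(1+m/\eta)^{O(\eta^{-1}\log(2/\eta))}\bigr)$. Multiplying by $r_{\max}+1$ gives the first term of the stated bound directly.

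The main obstacle is absorbing the factor $r_{\max}+1$ into the second term. I would use $\log\min\{m,n\}\le m\le(1+m/\varepsilon)$ and $\varepsilon^{-1}\le 1+m/\varepsilon$. This makes the factor at most $(1+m/\varepsilon)^{2}$, which is absorbed into the exponent $O(\varepsilon^{-1}\log(2/\varepsilon))$.

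It then remains to convert $\eta=\varepsilon/4$ back to $\varepsilon$. Use
\[
1+4m/\varepsilon\le(1+m/\varepsilon)^{3},
\]
and note that $\eta^{-1}\log(2/\eta)=O(\varepsilon^{-1}\log(2/\varepsilon))$, because $\log(8/\varepsilon)\le3\log(2/\varepsilon)$.

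The computation of makespan and cost for the final schedule is already provided by $\mathcal A_1$. Since the running time is polynomial in $|I|$ for each fixed $\varepsilon$ and the budget is always respected, the family of algorithms $\mathcal A_2$ is a PTAS.
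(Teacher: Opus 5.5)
Your proof is correct and follows the paper's argument essentially step for step. It uses the same minimal index $r^*$, the same combination of \eqref{eq:budget-deadline} with the unconditional makespan bound of Theorem~\ref{thm:problem1-guarantee} to reach $(1+\eta)^2C^*(\widehat U)$, and the same conversion $1+4m/\varepsilon\le(1+m/\varepsilon)^3$. The only cosmetic difference is that you absorb the number of calls via $r_{\max}+1=O\bigl((1+m/\varepsilon)^2\bigr)$, whereas the paper uses $r_{\max}+1\le5(1+m/\varepsilon)$; both work because the exponent $\varepsilon^{-1}\log(2/\varepsilon)$ is bounded below by a positive constant.
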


\begin{proof}
    If Step~1 stops, no schedule satisfies the budget by
    Section~\ref{sec:preliminaries}.
    Suppose that $a_1W(\mathcal J)+C_0\le\widehat U$, so that
    $L\le C^*(\widehat U)\le P(\mathcal J)$.
    Since $T_0=L$ and $T_{r_{\max}}\ge P(\mathcal J)$, there is a
    smallest index $r^*\le r_{\max}$ with $T_{r^*}\ge C^*(\widehat U)$.
    Either $r^*=0$ and $T_{r^*}=L\le C^*(\widehat U)$, or
    $T_{r^*-1}<C^*(\widehat U)$; in both cases,
    \begin{equation}
        C^*(\widehat U)\le T_{r^*}\le(1+\eta)C^*(\widehat U).
        \label{eq:budget-rstar}
    \end{equation}
    By \eqref{eq:budget-deadline} and
    Theorem~\ref{thm:problem1-guarantee}, the call
    $\mathcal A_1(I,T_{r^*},\eta)$ returns a schedule with total
    assignment cost at most $Z^*(T_{r^*})\le\widehat U$.
    Step~3 therefore returns a schedule $\widehat\sigma$ at some index
    $r_0\le r^*$, and $Z(\widehat\sigma)\le\widehat U$ by the acceptance
    condition.
    By Theorem~\ref{thm:problem1-guarantee}, \eqref{eq:budget-rstar},
    and $r_0\le r^*$,
    \[
        C_{\max}(\widehat\sigma)
        \le(1+\eta)T_{r_0}
        \le(1+\eta)T_{r^*}
        \le(1+\eta)^2C^*(\widehat U).
    \]
    Since $\eta=\varepsilon/4$ and $0<\varepsilon\le1$,
    \[
        (1+\eta)^2=1+\frac{\varepsilon}{2}+\frac{\varepsilon^2}{16}
        \le1+\frac{9\varepsilon}{16}\le1+\varepsilon.
    \]
    Hence $C_{\max}(\widehat\sigma)\le(1+\varepsilon)C^*(\widehat U)$,
    and $\mathcal A_2$ is a $(1,1+\varepsilon)$-approximation
    algorithm.

    For the running time, the inequalities
    $\log(1+\eta)\ge\eta/(1+\eta)$ and
    $P(\mathcal J)/L\le\min\{m,n\}$ give
    \begin{equation}
        r_{\max}+1
        =1+\left\lceil\frac{\log(P(\mathcal J)/L)}{\log(1+\eta)}
        \right\rceil
        \le2+\frac{5}{\varepsilon}\log(\min\{m,n\}).
        \label{eq:tccmm-number-of-calls}
    \end{equation}
    Steps~1 and~2 take $O(n+r_{\max})$ time, and Step~3 makes at most
    $r_{\max}+1$ calls of $\mathcal A_1$ with accuracy $\eta$.
    By Theorem~\ref{thm:problem1-guarantee}, each call takes
    $O(n\log(n+1)+(1+m/\eta)^{O(\eta^{-1}\log(2/\eta))})$ time.
    Since $1+m/\eta=1+4m/\varepsilon\le(1+m/\varepsilon)^3$ and
    $\eta^{-1}\log(2/\eta)=O(\varepsilon^{-1}\log(2/\varepsilon))$,
    this is
    $O(n\log(n+1)+(1+m/\varepsilon)^{O(\varepsilon^{-1}
    \log(2/\varepsilon))})$.
    Multiplying the first term by the bound in
    \eqref{eq:tccmm-number-of-calls} gives the first term of the
    stated running time.
    For the second term, $r_{\max}+1\le2+5m/\varepsilon
    \le5(1+m/\varepsilon)$, so the factor $r_{\max}+1$ changes only the
    constant in the exponent.
    Finally, the running time is polynomial in $|I|$ for each fixed
    $\varepsilon$, and $\mathcal A_2$ never violates the budget, so the
    algorithms $\mathcal A_2$ form a PTAS.
\end{proof}

\section{An approximation algorithm for
  $\operatorname{LEX}(C_{\max},\operatorname{TMC})$}
\label{sec:lex-approximation}

This section uses $\mathcal A_1$ to obtain, for each fixed
$0<\varepsilon\le1$, a polynomial-time
$(1+\varepsilon,1)$-approximation algorithm for
$\operatorname{LEX}(C_{\max},\operatorname{TMC})$.
Recall that $C^*$ is the minimum makespan and that $Z^*=Z^*(C^*)$.
Every $T\ge C^*$ satisfies $\mathcal F(T)\ne\varnothing$ and, by
\eqref{eq:prelim-monotone}, $Z^*(T)\le Z^*$.
Consequently, a call $\mathcal A_1(I,T,\eta)$ with $T\ge C^*$ returns
a schedule with total assignment cost at most $Z^*$.
In contrast to the budget $\widehat U$ in Section~\ref{sec:ptas-budget},
however, the bound $Z^*$ is not part of the input, so the acceptance
test of $\mathcal A_2$ is not available.

The algorithm therefore proceeds in two phases.
It first calls $\mathcal A_1$ with the deadlines $T_r$ of
Section~\ref{sec:ptas-budget} in increasing order and stops at the
first call that returns a schedule $\sigma^0$.
The deadline $T_{r_0}$ of this call may be smaller than $C^*$, in which
case $\mathcal F(T_{r_0})=\varnothing$ and the total assignment cost of
$\sigma^0$ is not controlled.
The makespan $\widehat T=C_{\max}(\sigma^0)$, however, satisfies
$\widehat T\ge C^*$ and, as shown below, exceeds $C^*$ by a factor of
at most $(1+\eta)^2$.
A second call with deadline $\widehat T$ therefore returns a schedule
whose total assignment cost is at most $Z^*$ and whose makespan is at
most $(1+\eta)\widehat T$.

\begin{description}
    \item[Algorithm $\mathcal A_3$.]
    \item[Input:] An instance $I$ of
          $\operatorname{LEX}(C_{\max},\operatorname{TMC})$ with
          $0\le a_1\le\cdots\le a_m$, and an accuracy parameter
          $0<\varepsilon\le1$.
    \item[Output:] A schedule $\widehat\sigma$ together with
          $\widehat C=C_{\max}(\widehat\sigma)$ and
          $\widehat Z=Z(\widehat\sigma)$.
    \item[Step 1.] Set $\eta=\varepsilon/4$, and define $L$, $T_r$, and
          $r_{\max}$ as in Step~2 of $\mathcal A_2$.
    \item[Step 2.] For $r=0,1,\ldots,r_{\max}$, call
          $\mathcal A_1(I,T_r,\eta)$.
          At the first call that returns a schedule $\sigma^0$, set
          $\widehat T=C_{\max}(\sigma^0)$ and go to Step~3.
    \item[Step 3.] Call $\mathcal A_1(I,\widehat T,\eta)$ and return its
          output.
\end{description}

\begin{theorem}
    \label{thm:problem3-guarantee}
    For every $0<\varepsilon\le1$, algorithm $\mathcal A_3$ is a
    $(1+\varepsilon,1)$-approximation algorithm for
    $\operatorname{LEX}(C_{\max},\operatorname{TMC})$ with running time
    \[
        O\Bigl(\bigl[1+\varepsilon^{-1}\log(\min\{m,n\})\bigr]
        n\log(n+1)
        +(1+m/\varepsilon)^{O(\varepsilon^{-1}\log(2/\varepsilon))}\Bigr).
    \]
\end{theorem}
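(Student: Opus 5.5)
We first show that Step~2 always produces a schedule $\sigma^0$, and we locate the index at which it does. Since $L\le C^*\le P(\mathcal J)$ by \eqref{eq:prelim-makespan-bounds}, and since $T_0=L$ and $T_{r_{\max}}\ge P(\mathcal J)$, there is a smallest index $r^*\le r_{\max}$ with $T_{r^*}\ge C^*$. The argument used for \eqref{eq:budget-rstar} then gives $C^*\le T_{r^*}\le(1+\eta)C^*$. Because $\mathcal F(T_{r^*})\ne\varnothing$, Theorem~\ref{thm:problem1-guarantee} guarantees that the call $\mathcal A_1(I,T_{r^*},\eta)$ does not fail. Hence the first successful call occurs at some index $r_0\le r^*$.

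By Theorem~\ref{thm:problem1-guarantee}, every returned schedule has makespan at most $(1+\eta)$ times its deadline, even when $\mathcal F(T_{r_0})=\varnothing$. This yields
\[
\widehat T=C_{\max}(\sigma^0)\le(1+\eta)T_{r_0}\le(1+\eta)T_{r^*}\le(1+\eta)^2C^*.
\]
The matching lower bound $\widehat T\ge C^*$ holds simply because $\sigma^0$ is a schedule and $C^*$ is the minimum makespan.

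Next we treat Step~3. Since $\widehat T\ge C^*$, a lexicographically optimal schedule belongs to $\mathcal F(\widehat T)$, so $\mathcal F(\widehat T)\ne\varnothing$. By \eqref{eq:prelim-monotone}, $Z^*(\widehat T)\le Z^*(C^*)=Z^*$. Theorem~\ref{thm:problem1-guarantee} therefore shows that the call $\mathcal A_1(I,\widehat T,\eta)$ returns a schedule $\widehat\sigma$ rather than FAIL, and that
\[
Z(\widehat\sigma)\le Z^*(\widehat T)\le Z^*
\quad\text{and}\quad
C_{\max}(\widehat\sigma)\le(1+\eta)\widehat T\le(1+\eta)^3C^*.
\]
It remains to check the constant: for $\eta=\varepsilon/4$ and $\varepsilon\le1$ we have
\[
(1+\eta)^3=1+\tfrac{3\varepsilon}{4}+\tfrac{3\varepsilon^2}{16}+\tfrac{\varepsilon^3}{64}\le1+\varepsilon.
\]
This establishes the $(1+\varepsilon,1)$ guarantee.

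For the running time, Step~2 makes at most $r_{\max}+1$ calls and Step~3 makes one more call. The count \eqref{eq:tccmm-number-of-calls} and the per-call bound from the proof of Theorem~\ref{thm:problem2-guarantee} apply verbatim, since $L$ is defined exactly as there. Computing $\widehat T$ is free, because $\mathcal A_1$ already outputs the makespan of its schedule. Thus the stated bound follows exactly as in Theorem~\ref{thm:problem2-guarantee}. The only delicate point is that the first successful call may have an infeasible deadline. This is handled by relying on the unconditional makespan bound of Theorem~\ref{thm:problem1-guarantee} and on the inequality $\widehat T\ge C^*$, so that the second call has a feasible deadline.
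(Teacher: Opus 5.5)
Your proposal is correct and follows essentially the same route as the paper's proof. You locate $r_0\le r^*$, use the unconditional makespan bound to get $C^*\le\widehat T\le(1+\eta)^2C^*$, and apply Theorem~\ref{thm:problem1-guarantee} once more at the feasible deadline $\widehat T$ using \eqref{eq:prelim-monotone}. The running time follows from the $r_{\max}+2$ call count together with the per-call analysis of Theorem~\ref{thm:problem2-guarantee}.
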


\begin{proof}
    Since $L\le C^*\le P(\mathcal J)$, the argument that leads to
    \eqref{eq:budget-rstar}, with $C^*$ in place of $C^*(\widehat U)$,
    yields an index $r^*\le r_{\max}$ with
    $C^*\le T_{r^*}\le(1+\eta)C^*$.
    Because $\mathcal F(T_{r^*})\ne\varnothing$,
    Theorem~\ref{thm:problem1-guarantee} ensures that
    $\mathcal A_1(I,T_{r^*},\eta)$ returns a schedule.
    Hence the first call in Step~2 that returns a schedule has an index
    $r_0\le r^*$.
    Its output $\sigma^0$ is a schedule of $I$, so
    $\widehat T=C_{\max}(\sigma^0)\ge C^*$, and
    Theorem~\ref{thm:problem1-guarantee} gives
    \begin{equation}
        C^*\le\widehat T
        \le(1+\eta)T_{r_0}
        \le(1+\eta)T_{r^*}
        \le(1+\eta)^2C^*.
        \label{eq:lex-actual-time-bound}
    \end{equation}

    Since $\widehat T\ge C^*$, the set $\mathcal F(\widehat T)$
    contains a lexicographically optimal schedule, and
    \eqref{eq:prelim-monotone} gives
    $Z^*(\widehat T)\le Z^*(C^*)=Z^*$.
    By Theorem~\ref{thm:problem1-guarantee}, Step~3 returns a schedule
    $\widehat\sigma$ with $Z(\widehat\sigma)\le Z^*(\widehat T)\le Z^*$
    and, by \eqref{eq:lex-actual-time-bound},
    \[
        C_{\max}(\widehat\sigma)
        \le(1+\eta)\widehat T
        \le(1+\eta)^3C^*.
    \]
    Since $\eta=\varepsilon/4$ and $0<\varepsilon\le1$,
    \[
        (1+\eta)^3
        =1+\frac{3\varepsilon}{4}+\frac{3\varepsilon^2}{16}
        +\frac{\varepsilon^3}{64}
        \le1+\frac{61\varepsilon}{64}\le1+\varepsilon.
    \]
    Hence $C_{\max}(\widehat\sigma)\le(1+\varepsilon)C^*$ and
    $Z(\widehat\sigma)\le Z^*$, which proves that $\mathcal A_3$ is a
    $(1+\varepsilon,1)$-approximation algorithm.

    The deadline sequence and the accuracy of the calls are the same as
    in $\mathcal A_2$, and $\mathcal A_3$ makes at most $r_{\max}+2$
    calls of $\mathcal A_1$.
    The analysis in the proof of Theorem~\ref{thm:problem2-guarantee}
    therefore yields the stated running time.
\end{proof}

\section{Extension to costs $c_{ij}=s_i+a_iw_j$}
\label{sec:affine-qptas}

This section considers the cost function $c_{ij}=s_i+a_iw_j$, in which
a base cost $s_i\ge0$ is charged for each job processed on machine
$M_i$.
We write $I=(\mathcal M,\mathcal J;p,s,a,w)$ for the machine and job
data and index the machines so that $0\le a_1\le\cdots\le a_m$; no
ordering relation is imposed on $s_1,\ldots,s_m$.
All other assumptions are as in Section~\ref{sec:preliminaries}.
The total assignment cost of a schedule $\sigma=(B_1,\ldots,B_m)$ is
\begin{equation}
    Z(\sigma)=\sum_{i=1}^{m}\bigl(s_i|B_i|+a_iW(B_i)\bigr),
    \label{eq:affine-cost}
\end{equation}
and $\mathcal F(T)$, $Z^*(T)$, $C^*(\widehat U)$, $C^*$, and $Z^*$ are
defined as in Section~\ref{sec:preliminaries} with this cost function.
In particular, \eqref{eq:prelim-monotone} and
\eqref{eq:prelim-makespan-bounds} remain valid.

The base cost $\sum_is_i|B_i|$ depends on the number of jobs assigned
to each machine.
If the number of jobs from each rounded class on each machine is fixed,
reassigning jobs within a class leaves this term unchanged, so the rule
for large jobs in Section~\ref{sec3} remains applicable.
The rule for small jobs, by contrast, maximizes prefix sums of weights
but may change the number of jobs on each machine, and it may
therefore increase the base cost.
We thus treat separately only the jobs with processing time at most
$\varepsilon T/(2n)$.
Their total processing time is at most $\varepsilon T/2$, so each of
them can be assigned to a machine of minimum cost without regard to
the machine loads.
The remaining jobs are assigned by a dynamic program over rounded
classes, as in Section~\ref{sec3}.
Since their processing times can be as small as $\varepsilon T/(2n)$,
the number of classes grows to $O(\varepsilon^{-1}\log(n/\varepsilon))$
and the number of jobs on a machine is no longer bounded by a constant,
which leads to a quasi-polynomial running time.

\subsection{The deadline-constrained problem}
\label{subsec:affine-deadline}

Fix $T>0$ and $0<\varepsilon\le1$.
As in Section~\ref{sec3}, we assume that $p_{\max}\le T$ and
$P(\mathcal J)\le mT$, since otherwise $\mathcal F(T)=\varnothing$, and
we set
\[
    \delta=\varepsilon/2,
    \qquad
    \tau=\delta T/n,
    \qquad
    H=(1+\delta)T.
\]
The symbols $\tau$, $\pi_t$, $K$, $\mathcal J_t$, $\boldsymbol N$,
$\ell_{t,r}$, $F_t$, $\mathcal Z$, $D_i$, and $\operatorname{Prev}$ are
redefined in this section.

\paragraph{Tiny jobs.}
Let $\mathcal J_0=\{j\in\mathcal J:p_j\le\tau\}$ and
$\mathcal J_+=\mathcal J\setminus\mathcal J_0$.
Each job $j\in\mathcal J_0$ is assigned to machine $M_{\mu(j)}$, where
$\mu(j)$ is the smallest index in
$\arg\min_{i\in[m]}(s_i+a_iw_j)$.
Let $E_i=\{j\in\mathcal J_0:\mu(j)=i\}$ for $i\in[m]$, and let
$Z_0=\sum_{j\in\mathcal J_0}\min_{i\in[m]}(s_i+a_iw_j)$ denote the
total cost of these assignments.
Since $|\mathcal J_0|\le n$,
\begin{equation}
    P(E_i)\le P(\mathcal J_0)\le n\tau=\delta T,
    \qquad i\in[m].
    \label{eq:affine-tiny-load}
\end{equation}

\paragraph{Rounding and ordered assignment.}
Let $\pi_t=(1+\delta)^t\tau$ for $t=0,1,\ldots$, and let $K$ be the
smallest index with $\pi_K\ge T$, that is,
\[
    K=\left\lceil\frac{\log(n/\delta)}{\log(1+\delta)}\right\rceil
    =O\bigl(\delta^{-1}\log(n/\delta)\bigr).
\]
In contrast to Section~\ref{sec3}, $K$ depends on $n$; it is
$O(\log(n+1))$ for fixed $\varepsilon$.
The classes $\mathcal J_t=\{j\in\mathcal J_+:\pi_{t-1}<p_j\le\pi_t\}$,
$t\in[K]$, are pairwise disjoint and cover $\mathcal J_+$.
Let $N_t=|\mathcal J_t|$, $\boldsymbol N=(N_1,\ldots,N_K)$, and
$\bar p_j=\pi_t$ for $j\in\mathcal J_t$.
The rounding bound \eqref{eq:mctcm-rounding-bound} holds for every
subset of $\mathcal J_+$, with the same proof.
For each nonempty class, let $\ell_{t,1},\ldots,\ell_{t,N_t}$ be its
jobs in non-increasing order of weight, with ties broken by increasing
job index, and let $F_t(r)=\sum_{v=1}^{r}w_{\ell_{t,v}}$ for
$r=0,\ldots,N_t$.
For $\boldsymbol x\in\mathbb Z_{\ge0}^K$, let
$\bar P(\boldsymbol x)=\sum_{t=1}^{K}x_t\pi_t$.
The following lemma adapts Lemma~\ref{lem:mctcm-large-order} to the
present cost function.

\begin{lemma}
    \label{lem:affine-ordered-assignment}
    Let $L_1,\ldots,L_m$ be pairwise disjoint, possibly empty, subsets
    of $\mathcal J_+$ with union $\mathcal J_+$, and let
    $q_{i,t}=|L_i\cap\mathcal J_t|$ for $i\in[m]$ and $t\in[K]$.
    Let $L_1',\ldots,L_m'$ be obtained by considering
    $M_1,\ldots,M_m$ in this order and assigning to $M_i$ the next
    $q_{i,t}$ jobs of the list of $\mathcal J_t$ for every $t\in[K]$,
    as in \eqref{eq:mctcm-large-assignment}.
    Then $L_1',\ldots,L_m'$ are pairwise disjoint with union
    $\mathcal J_+$, $|L_i'\cap\mathcal J_t|=q_{i,t}$ for all $i$ and
    $t$, and
    \[
        \sum_{i=1}^{m}\bigl(s_i|L_i'|+a_iW(L_i')\bigr)
        \le\sum_{i=1}^{m}\bigl(s_i|L_i|+a_iW(L_i)\bigr).
    \]
    In particular, $\sum_{j\in L_i'}\bar p_j=\sum_{j\in L_i}\bar p_j$
    for every $i\in[m]$.
\end{lemma}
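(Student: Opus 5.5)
The proof splits the cost into the base-cost part $\sum_is_i|L_i|$ and the weighted part $\sum_ia_iW(L_i)$, and treats them separately. The structural claims come first and follow from the same bookkeeping as in Section~\ref{subsec:sec3-large}. Set $\boldsymbol z^0=\boldsymbol0$ and $\boldsymbol z^i=\sum_{k\le i}\boldsymbol q_k$. Since the $L_i$ partition $\mathcal J_+$, we have $\sum_{i=1}^m q_{i,t}=N_t$ for every $t$, so $z_t^m=N_t$. Machine $M_i$ receives list positions $z_t^{i-1}+1,\ldots,z_t^i$ of $\mathcal J_t$. These index intervals are pairwise disjoint and together cover $1,\ldots,N_t$. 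Hence the $L_i'$ are pairwise disjoint with union $\mathcal J_+$, and $|L_i'\cap\mathcal J_t|=q_{i,t}$. Because every job of $\mathcal J_t$ has rounded time $\pi_t$, this gives
\[
\sum_{j\in L_i'}\bar p_j=\sum_t q_{i,t}\pi_t=\sum_{j\in L_i}\bar p_j .
\]
It also gives $|L_i'|=\sum_t q_{i,t}=|L_i|$ for every $i$, so the base-cost terms $\sum_is_i|L_i'|$ and $\sum_is_i|L_i|$ coincide. This is exactly why no ordering assumption on the $s_i$ is needed.

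For the weighted part, I would repeat the prefix argument of Lemma~\ref{lem:mctcm-large-order}. Fix $r\in[m]$ and $t\in[K]$. Both families place exactly $z_t^r$ jobs of $\mathcal J_t$ on $M_1,\ldots,M_r$. The primed family places the first $z_t^r$ jobs of the weight-sorted list, so its total weight is $F_t(z_t^r)$, the maximum over all $z_t^r$-subsets of $\mathcal J_t$. Summing over $t$ yields $\sum_{i\le r}W(L_i')\ge\sum_{i\le r}W(L_i)$ for all $r$, with equality at $r=m$, where both sides equal $W(\mathcal J_+)$. Lemma~\ref{lem:mctcm-prefix-cost}, applied with $x_i=W(L_i')$ and $y_i=W(L_i)$ and using $a_1\le\cdots\le a_m$, then gives $\sum_ia_iW(L_i')\le\sum_ia_iW(L_i)$. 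Adding the equal base-cost terms proves the inequality in the statement.

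One might worry that Lemma~\ref{lem:mctcm-large-order} was stated for configuration sequences with configurations in $\mathcal Q$ from Section~\ref{sec3}. I expect no real obstacle here, because its proof used only the count identities $\sum_i q_{i,t}=N_t$ and never any bound on the configurations. Rather than citing that lemma, I would therefore re-run its short argument directly for the classes of this section.
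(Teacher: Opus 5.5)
Your proposal is correct and matches the paper's proof step by step. The count identities give $|L_i'|=|L_i|$ and equal rounded loads, so the base costs coincide, and the prefix-weight argument of Lemma~\ref{lem:mctcm-large-order} combined with Lemma~\ref{lem:mctcm-prefix-cost} handles the weighted part; your remark that this argument uses only $\sum_i q_{i,t}=N_t$ and no configuration bounds is what lets the paper invoke it here.
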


\begin{proof}
    Since $\sum_{i=1}^{m}q_{i,t}=N_t$ for every $t$, every job of
    $\mathcal J_+$ is assigned exactly once, and
    $|L_i'\cap\mathcal J_t|=q_{i,t}$.
    Hence $|L_i'|=|L_i|$, so the base costs coincide on every machine,
    and both $L_i'$ and $L_i$ have rounded load
    $\sum_{t=1}^{K}q_{i,t}\pi_t$.
    The argument in the proof of Lemma~\ref{lem:mctcm-large-order}
    shows that $\sum_{i=1}^{r}W(L_i')\ge\sum_{i=1}^{r}W(L_i)$ for every
    $r\in[m]$, and Lemma~\ref{lem:mctcm-prefix-cost} then gives
    $\sum_{i=1}^{m}a_iW(L_i')\le\sum_{i=1}^{m}a_iW(L_i)$.
\end{proof}

\paragraph{Dynamic program.}
Let $\mathcal Z=\{\boldsymbol v\in\mathbb Z_{\ge0}^K:
\boldsymbol v\le\boldsymbol N\}$.
A state $(i,\boldsymbol v)$ with $i\in\{0,\ldots,m\}$ and
$\boldsymbol v\in\mathcal Z$ represents the assignments of the first
$v_t$ jobs of each list to $M_1,\ldots,M_i$ in which, for every $t$,
the machines $M_1,\ldots,M_i$ receive consecutive blocks of the list
of $\mathcal J_t$ in this order, and the rounded load of every machine
is at most $H$.
In contrast to Section~\ref{sec3}, no configuration set is used: a
machine may receive any number of jobs, subject only to the bound $H$.
Let $D_i(\boldsymbol v)$ be the minimum of
$\sum_{k=1}^{i}\bigl(s_k|L_k|+a_kW(L_k)\bigr)$ over these assignments,
where $L_k$ is the set of jobs assigned to $M_k$, with
$D_i(\boldsymbol v)=+\infty$ if no such assignment exists.

If $M_i$ receives the jobs in positions $u_t+1,\ldots,v_t$ of each
list, where $\boldsymbol u\le\boldsymbol v$, its rounded load is
$\bar P(\boldsymbol v)-\bar P(\boldsymbol u)$, and its cost is
\begin{equation}
    g_i(\boldsymbol u,\boldsymbol v)
    =s_i\sum_{t=1}^{K}(v_t-u_t)
    +a_i\sum_{t=1}^{K}\bigl[F_t(v_t)-F_t(u_t)\bigr].
    \label{eq:affine-transition-cost}
\end{equation}
Set $D_0(\boldsymbol0)=0$ and $D_0(\boldsymbol v)=+\infty$ for
$\boldsymbol v\ne\boldsymbol0$.
For $i\in[m]$ and $\boldsymbol v\in\mathcal Z$, let
\begin{equation}
    D_i(\boldsymbol v)
    =\min_{\substack{
            \boldsymbol u\in\mathcal Z,\ \boldsymbol u\le\boldsymbol v\\
            \bar P(\boldsymbol v)-\bar P(\boldsymbol u)\le H\\
            D_{i-1}(\boldsymbol u)<+\infty}}
    \bigl\{D_{i-1}(\boldsymbol u)+g_i(\boldsymbol u,\boldsymbol v)\bigr\},
    \label{eq:affine-recurrence}
\end{equation}
and record a minimizing vector $\boldsymbol u$ as
$\operatorname{Prev}(i,\boldsymbol v)$ whenever $D_i(\boldsymbol v)$ is
finite.
The choice $\boldsymbol u=\boldsymbol v$ assigns no job of
$\mathcal J_+$ to $M_i$.
The induction in the proof of Lemma~\ref{lem:mctcm-dp-optimality}
shows that the values computed by \eqref{eq:affine-recurrence} are
the minima defined above.

\paragraph{The algorithm.}
The complete procedure for the deadline-constrained problem is as
follows.

\begin{description}
    \item[Algorithm $\mathcal A_1'$.]
    \item[Input:] An instance $(I,T)$ of
          $P|C_{\max}\le T|\operatorname{TMC}$ with costs
          $c_{ij}=s_i+a_iw_j$ and $0\le a_1\le\cdots\le a_m$, and an
          accuracy parameter $0<\varepsilon\le1$.
    \item[Output:] A schedule $\widehat\sigma$ together with
          $\widehat C=C_{\max}(\widehat\sigma)$ and
          $\widehat Z=Z(\widehat\sigma)$, or $\mathrm{FAIL}$.
    \item[Step 1.] If $p_{\max}>T$ or $P(\mathcal J)>mT$, return
          $\mathrm{FAIL}$.
    \item[Step 2.] Set $\delta=\varepsilon/2$, $\tau=\delta T/n$, and
          $H=(1+\delta)T$.
          Determine $\mathcal J_0$, $\mathcal J_+$, the sets
          $E_1,\ldots,E_m$, and $Z_0$.
    \item[Step 3.] Determine $\pi_0,\ldots,\pi_K$, the classes
          $\mathcal J_1,\ldots,\mathcal J_K$, and $\boldsymbol N$.
          Sort each nonempty class in non-increasing order of $w_j$,
          breaking ties by increasing job index, and compute
          $F_1,\ldots,F_K$.
    \item[Step 4.] Compute $D_i(\boldsymbol v)$ and
          $\operatorname{Prev}(i,\boldsymbol v)$ for $i=0,\ldots,m$ and
          $\boldsymbol v\in\mathcal Z$ by
          \eqref{eq:affine-recurrence}.
    \item[Step 5.] If $D_m(\boldsymbol N)=+\infty$, return
          $\mathrm{FAIL}$.
          Otherwise, starting from $\boldsymbol v=\boldsymbol N$, for
          $i=m,m-1,\ldots,1$, let
          $\boldsymbol u=\operatorname{Prev}(i,\boldsymbol v)$, set
          \[
              \widehat B_i=E_i\cup\bigcup_{t=1}^{K}
              \{\ell_{t,r}:u_t<r\le v_t\},
          \]
          and replace $\boldsymbol v$ by $\boldsymbol u$.
          Return $\widehat\sigma=(\widehat B_1,\ldots,\widehat B_m)$,
          $\widehat C=\max_{i\in[m]}P(\widehat B_i)$ computed with the
          original processing times, and
          $\widehat Z=D_m(\boldsymbol N)+Z_0$.
\end{description}

If $\mathcal J_+=\varnothing$, then $\boldsymbol N=\boldsymbol0$, and
the algorithm returns $(E_1,\ldots,E_m)$.

\begin{theorem}
    \label{thm:affine-qptas}
    For every $0<\varepsilon\le1$, algorithm $\mathcal A_1'$ is a
    $(1+\varepsilon,1)$-approximation algorithm for
    $P|C_{\max}\le T|\operatorname{TMC}$ with costs
    $c_{ij}=s_i+a_iw_j$.
    More precisely, every schedule returned by $\mathcal A_1'$
    satisfies $C_{\max}(\widehat\sigma)\le(1+\varepsilon)T$, even if
    $\mathcal F(T)=\varnothing$; if $\mathcal F(T)\ne\varnothing$, the
    algorithm returns a schedule with $Z(\widehat\sigma)\le Z^*(T)$;
    and if it returns $\mathrm{FAIL}$, then $\mathcal F(T)=\varnothing$.
    The running time of $\mathcal A_1'$ is $O\bigl(mK(n+1)^{2K}\bigr)$,
    where $K=\lceil\log(2n/\varepsilon)/\log(1+\varepsilon/2)\rceil$.
\end{theorem}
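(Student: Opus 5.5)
The proof splits into three parts: the makespan bound for every output, the cost bound when $\mathcal F(T)\ne\varnothing$, and the running time. The analogous parts of Theorem~\ref{thm:problem1-guarantee} serve as the template.

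\emph{Makespan and correctness of FAIL in Step~1.} If Step~1 returns $\mathrm{FAIL}$, then $\mathcal F(T)=\varnothing$ by Section~\ref{sec:preliminaries}. Otherwise, a returned schedule comes from a finite value $D_m(\boldsymbol N)$. By the induction argument of Lemma~\ref{lem:mctcm-dp-optimality}, the recovery through $\operatorname{Prev}$ yields consecutive blocks of each class list. These blocks partition $\mathcal J_+$, and each machine has rounded load at most $H$. Together with the sets $E_i$, which partition $\mathcal J_0$, every job is assigned exactly once. By the rounding bound \eqref{eq:mctcm-rounding-bound} on $\mathcal J_+$ and by \eqref{eq:affine-tiny-load}, we get
\[
P(\widehat B_i)\le H+\delta T=(1+\varepsilon)T .
\]
The cost of $\widehat\sigma$ is $D_m(\boldsymbol N)+Z_0$, since the two parts are disjoint and the costs are additive over jobs.

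\emph{Cost bound.} Let $\sigma^\circ=(B_1^\circ,\dots,B_m^\circ)\in\mathcal F(T)$ attain $Z^*(T)$. Split each set as $L_i=B_i^\circ\cap\mathcal J_+$ and $B_i^\circ\cap\mathcal J_0$.

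For the tiny jobs, every $j\in\mathcal J_0$ costs at least $\min_i(s_i+a_iw_j)$ in $\sigma^\circ$. Hence the tiny part of $Z(\sigma^\circ)$ is at least $Z_0$.

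For the remaining jobs, apply Lemma~\ref{lem:affine-ordered-assignment} to $L_1,\dots,L_m$. It yields $L_1',\dots,L_m'$ made of consecutive blocks in machine order, with the same rounded loads and no larger cost. The rounded load of $L_i'$ equals $\sum_{j\in L_i}\bar p_j\le(1+\delta)P(L_i)\le(1+\delta)T=H$. So $(L_1',\dots,L_m')$ is one of the assignments represented by the state $(m,\boldsymbol N)$, and the DP minimality gives
\[
D_m(\boldsymbol N)\le\sum_i\bigl(s_i|L_i|+a_iW(L_i)\bigr).
\]
Adding the two parts gives $Z(\widehat\sigma)\le Z(\sigma^\circ)=Z^*(T)$. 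The same argument shows $D_m(\boldsymbol N)<+\infty$ whenever $\mathcal F(T)\ne\varnothing$, so a FAIL in Step~5 implies $\mathcal F(T)=\varnothing$.

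The main point to verify is the DP correctness claim: states must be defined by consecutive blocks with rounded loads at most $H$, with no configuration set. I would redo the two-direction induction of Lemma~\ref{lem:mctcm-dp-optimality} briefly. Every such assignment restricted to $M_1,\dots,M_{i-1}$ is represented by $(i-1,\boldsymbol u)$, and every transition extends a represented assignment.

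\emph{Running time.} $|\mathcal Z|=\prod_t(N_t+1)\le(n+1)^K$. Each recurrence step at $(i,\boldsymbol v)$ scans at most $|\mathcal Z|$ vectors $\boldsymbol u$. Each scan evaluates $g_i$ and $\bar P$ in $O(K)$ time, using precomputed $\bar P$ values and the prefix sums $F_t$. This gives $O(mK(n+1)^{2K})$ for Step~4. Steps~1--3 and~5 take $O(nm+n\log n+K(n+1)^K)$ time, which is absorbed since $K\ge1$ and $(n+1)^{2K}\ge n^2$. Finally, substitute $\delta=\varepsilon/2$ into $K=\lceil\log(n/\delta)/\log(1+\delta)\rceil$ to get the stated form of $K$.
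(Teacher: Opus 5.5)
Your proposal is correct and follows essentially the same route as the paper. The paper likewise bounds every returned schedule by rounded load at most $H$ plus tiny-job load at most $\delta T$. It obtains the cost bound by applying Lemma~\ref{lem:affine-ordered-assignment} to the jobs of $\mathcal J_+$ in a schedule attaining $Z^*(T)$ and bounding the cost of the tiny jobs below by $Z_0$, and it derives the running time from $O(mKV^2)$ with $V\le(n+1)^K$.
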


\begin{proof}
    If Step~1 returns $\mathrm{FAIL}$, then $\mathcal F(T)=\varnothing$
    by Section~\ref{sec:preliminaries}.
    Suppose that Step~5 returns a schedule $\widehat\sigma$.
    As in Lemma~\ref{lem:mctcm-dp-optimality}, the recovery passes only
    through states with finite values and ends at
    $(0,\boldsymbol0)$, so every job of $\mathcal J_+$ is assigned
    exactly once; every job of $\mathcal J_0$ is assigned through the
    sets $E_i$.
    By \eqref{eq:affine-transition-cost},
    $Z(\widehat\sigma)=D_m(\boldsymbol N)+Z_0$.
    The jobs of $\mathcal J_+$ on each machine have rounded load at most
    $H$, so $p_j\le\bar p_j$ and \eqref{eq:affine-tiny-load} give
    \[
        P(\widehat B_i)\le H+\delta T=(1+2\delta)T=(1+\varepsilon)T,
        \qquad i\in[m].
    \]
    This bound does not depend on whether $\mathcal F(T)$ is empty.

    Suppose that $\mathcal F(T)\ne\varnothing$, and let
    $\sigma^T=(B_1^T,\ldots,B_m^T)$ be a schedule attaining $Z^*(T)$.
    By \eqref{eq:mctcm-rounding-bound}, the jobs of $\mathcal J_+$ on
    each machine of $\sigma^T$ have rounded load at most
    $(1+\delta)T=H$.
    Lemma~\ref{lem:affine-ordered-assignment}, applied to
    $L_i=B_i^T\cap\mathcal J_+$, yields an assignment of
    $\mathcal J_+$ that is represented by the state
    $(m,\boldsymbol N)$ and whose cost does not exceed the cost of the
    jobs of $\mathcal J_+$ in $\sigma^T$.
    Hence $D_m(\boldsymbol N)$ is finite and at most this cost, and
    Step~5 returns a schedule.
    Since each job of $\mathcal J_0$ is assigned at its minimum cost,
    $Z_0$ does not exceed the cost of the jobs of $\mathcal J_0$ in
    $\sigma^T$.
    Adding the two bounds gives
    \begin{equation}
        Z(\widehat\sigma)=D_m(\boldsymbol N)+Z_0
        \le Z(\sigma^T)=Z^*(T).
        \label{eq:affine-cost-guarantee}
    \end{equation}
    In particular, if Step~5 returns $\mathrm{FAIL}$, then
    $\mathcal F(T)=\varnothing$.

    For the running time, let $V=|\mathcal Z|=\prod_{t=1}^{K}(N_t+1)
    \le(n+1)^K$.
    Step~2 takes $O(mn)$ time.
    Step~3 takes $O(n\log(n+1)+nK+KV)$ time, including the computation
    of $\bar P(\boldsymbol v)$ for all $\boldsymbol v\in\mathcal Z$.
    Step~4 considers at most $V^2$ pairs $(\boldsymbol u,\boldsymbol v)$
    for each machine, and each pair is processed in $O(K)$ time with the
    prefix sums $F_t$, so this step takes $O(mKV^2)$ time.
    Step~5 takes $O(mK+n)$ time.
    Since $K\ge1$, all terms are bounded by $O(mK(n+1)^{2K})$.
    Finally, $\delta=\varepsilon/2$ gives $n/\delta=2n/\varepsilon$,
    which yields the stated value of $K$.
\end{proof}

\subsection{The budget-constrained and lexicographic problems}
\label{subsec:affine-other}

For $P|\operatorname{TMC}\le\widehat U|C_{\max}$, the minimum total
assignment cost is attained by assigning every job to a machine of
minimum cost.
Hence a schedule satisfying the budget exists if and only if
\begin{equation}
    \sum_{j\in\mathcal J}\min_{i\in[m]}(s_i+a_iw_j)\le\widehat U.
    \label{eq:affine-budget-feasibility}
\end{equation}
The left-hand side can be computed in $O(mn)$ time, and the
corresponding schedule has makespan at most $P(\mathcal J)$, so that
$L\le C^*(\widehat U)\le P(\mathcal J)$ whenever
\eqref{eq:affine-budget-feasibility} holds.
Let $\mathcal A_2'$ be obtained from $\mathcal A_2$ by replacing the
test in Step~1 with \eqref{eq:affine-budget-feasibility} and every
call $\mathcal A_1(I,T,\eta)$ with $\mathcal A_1'(I,T,\eta)$, and let
$\mathcal A_3'$ be obtained from $\mathcal A_3$ by the latter
replacement.

\begin{theorem}
    \label{thm:affine-outer}
    Let $c_{ij}=s_i+a_iw_j$ and $0<\varepsilon\le1$.
    Algorithm $\mathcal A_2'$ is a $(1,1+\varepsilon)$-approximation
    algorithm for $P|\operatorname{TMC}\le\widehat U|C_{\max}$, and
    algorithm $\mathcal A_3'$ is a $(1+\varepsilon,1)$-approximation
    algorithm for $\operatorname{LEX}(C_{\max},\operatorname{TMC})$.
    Both run in time
    \begin{equation}
        O\Bigl(\bigl[1+\varepsilon^{-1}\log(\min\{m,n\})\bigr]
        mK_\eta(n+1)^{2K_\eta}\Bigr),
        \qquad
        K_\eta=\left\lceil\frac{\log(8n/\varepsilon)}
        {\log(1+\varepsilon/8)}\right\rceil,
        \label{eq:affine-outer-time}
    \end{equation}
    which is quasi-polynomial for each fixed $\varepsilon$.
    In particular, the algorithms $\mathcal A_2'$ with
    $0<\varepsilon\le1$ form a QPTAS for
    $P|\operatorname{TMC}\le\widehat U|C_{\max}$ with these costs.
\end{theorem}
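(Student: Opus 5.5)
The plan is to repeat the proofs of Theorems~\ref{thm:problem2-guarantee} and~\ref{thm:problem3-guarantee} essentially word for word. The only ingredient of those proofs specific to $\mathcal A_1$ is the guarantee of Theorem~\ref{thm:problem1-guarantee}, and Theorem~\ref{thm:affine-qptas} gives the identical guarantee for $\mathcal A_1'$. That guarantee has three parts: every returned schedule has makespan at most $(1+\eta)T$ even if $\mathcal F(T)=\varnothing$; if $\mathcal F(T)\ne\varnothing$, a schedule of cost at most $Z^*(T)$ is returned; and $\mathrm{FAIL}$ implies $\mathcal F(T)=\varnothing$.

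The structural facts used by the outer procedures also carry over to $c_{ij}=s_i+a_iw_j$. These are the monotonicity \eqref{eq:prelim-monotone}, the bounds \eqref{eq:prelim-makespan-bounds}, and the implication \eqref{eq:budget-deadline}, whose proof uses only the definitions of $\mathcal F$, $Z^*$, and $C^*(\widehat U)$.

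For $\mathcal A_2'$, the first step is to check Step~1. If \eqref{eq:affine-budget-feasibility} fails, no schedule meets the budget, because every job costs at least its minimum over machines. If it holds, then $L\le C^*(\widehat U)\le P(\mathcal J)$, as noted before the theorem. The existence of $r^*$ with $C^*(\widehat U)\le T_{r^*}\le(1+\eta)C^*(\widehat U)$ then follows exactly as in \eqref{eq:budget-rstar}. The call $\mathcal A_1'(I,T_{r^*},\eta)$ returns a schedule of cost at most $Z^*(T_{r^*})\le\widehat U$, so acceptance happens at some $r_0\le r^*$. This gives makespan at most $(1+\eta)^2C^*(\widehat U)\le(1+\varepsilon)C^*(\widehat U)$, and the budget holds by the acceptance test. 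For $\mathcal A_3'$, I would copy the two-phase argument: the first returned $\sigma^0$ has $C^*\le\widehat T\le(1+\eta)^2C^*$; the second call has $\mathcal F(\widehat T)\ne\varnothing$, so it returns cost at most $Z^*(\widehat T)\le Z^*$ and makespan at most $(1+\eta)^3C^*\le(1+\varepsilon)C^*$.

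The running time is where some care is needed. With $\eta=\varepsilon/4$, Theorem~\ref{thm:affine-qptas} gives $K=\lceil\log(2n/\eta)/\log(1+\eta/2)\rceil=\lceil\log(8n/\varepsilon)/\log(1+\varepsilon/8)\rceil=K_\eta$, so each call costs $O(mK_\eta(n+1)^{2K_\eta})$. The number of calls is at most $r_{\max}+2=O(1+\varepsilon^{-1}\log\min\{m,n\})$ by \eqref{eq:tccmm-number-of-calls}. The remaining $O(mn)$ test in Step~1 and the $O(n+r_{\max})$ setup are absorbed by the per-call bound, because $K_\eta\ge1$ gives $(n+1)^{2K_\eta}\ge n$. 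Multiplying gives \eqref{eq:affine-outer-time}.

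To check that this is quasi-polynomial, note that for fixed $\varepsilon$ we have $K_\eta=O(\log(n+1))$, so $(n+1)^{2K_\eta}=2^{O(\log^2(n+1))}\le|I|^{O(\log|I|)}$, and the other factors are polynomial. Since $\mathcal A_2'$ never violates the budget, the family forms a QPTAS. I expect no real obstacle. The only point to verify is that no step of the earlier proofs used the specific form $a_iw_j+b_j$, for instance through $a_1W(\mathcal J)+C_0$; that form enters only via Step~1, which has been replaced.
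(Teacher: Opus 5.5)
Your proposal is correct and follows the paper's proof essentially step for step. Like the paper, you reduce to the arguments of Theorems~\ref{thm:problem2-guarantee} and~\ref{thm:problem3-guarantee} using the new feasibility test, the cost-independent relations, and the guarantees of $\mathcal A_1'$ from Theorem~\ref{thm:affine-qptas}, and you bound the running time by $O(1+\varepsilon^{-1}\log\min\{m,n\})$ calls, each costing $O(mK_\eta(n+1)^{2K_\eta})$ with $\delta=\varepsilon/8$, with the $O(mn)$ test and deadline generation absorbed and $K_\eta=O(\log(n+1))$ giving the quasi-polynomial bound.
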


\begin{proof}
    The proofs of Theorems~\ref{thm:problem2-guarantee}
    and~\ref{thm:problem3-guarantee} use only three ingredients: the
    characterization of budget feasibility together with
    $L\le C^*(\widehat U)\le P(\mathcal J)$; the relations
    \eqref{eq:budget-deadline}, \eqref{eq:prelim-monotone}, and
    \eqref{eq:prelim-makespan-bounds}, which hold for every cost
    function; and the properties of $\mathcal A_1$ stated in
    Theorem~\ref{thm:problem1-guarantee}.
    The first ingredient is provided by
    \eqref{eq:affine-budget-feasibility}, and
    Theorem~\ref{thm:affine-qptas} provides the same properties for
    $\mathcal A_1'$.
    The proofs therefore carry over and show that $\mathcal A_2'$
    returns a schedule with $Z(\widehat\sigma)\le\widehat U$ and
    $C_{\max}(\widehat\sigma)\le(1+\eta)^2C^*(\widehat U)$, and that
    $\mathcal A_3'$ returns a schedule with $Z(\widehat\sigma)\le Z^*$
    and $C_{\max}(\widehat\sigma)\le(1+\eta)^3C^*$.
    Since $\eta=\varepsilon/4$, we have
    $(1+\eta)^2\le(1+\eta)^3\le1+61\varepsilon/64\le1+\varepsilon$.

    By \eqref{eq:tccmm-number-of-calls}, each algorithm makes at most
    $r_{\max}+2=O(1+\varepsilon^{-1}\log(\min\{m,n\}))$ calls of
    $\mathcal A_1'$ with accuracy $\eta=\varepsilon/4$.
    In each call, $\delta=\eta/2=\varepsilon/8$ and
    $n/\delta=8n/\varepsilon$, so Theorem~\ref{thm:affine-qptas}
    bounds its running time by $O(mK_\eta(n+1)^{2K_\eta})$.
    The feasibility test and the generation of the deadlines take
    $O(mn+r_{\max})$ time, which is dominated by the calls.
    This proves \eqref{eq:affine-outer-time}.
    For fixed $\varepsilon$, $K_\eta=O(\log(n+1))$, so
    $(n+1)^{2K_\eta}=(n+1)^{O(\log(n+1))}$, and the running time is
    bounded by $|I|^{O(\log|I|)}$.
    Since $\mathcal A_2'$ never violates the budget, the algorithms
    $\mathcal A_2'$ form a QPTAS.
\end{proof}

For $c_{ij}=s_i+l_ip_j$, which is obtained with $a_i=l_i$ and
$w_j=p_j$, Theorem~\ref{thm:affine-outer} yields a QPTAS for
$P|\operatorname{TMC}\le\widehat U|C_{\max}$ that satisfies the
original budget.
Whether a PTAS exists for this cost function remains open.

\section{Conclusion}
\label{sec:conclusion}

This paper has studied non-preemptive scheduling on identical parallel
machines with assignment costs $c_{ij}=a_iw_j+b_j$, where the
processing times $p_j$ and the cost weights $w_j$ are specified
independently and the number of machines is part of the input.
For each fixed $0<\varepsilon\le1$, we obtained polynomial-time
$(1+\varepsilon,1)$-, $(1,1+\varepsilon)$-, and
$(1+\varepsilon,1)$-approximation algorithms for
$P|C_{\max}\le T|\operatorname{TMC}$,
$P|\operatorname{TMC}\le\widehat U|C_{\max}$, and
$\operatorname{LEX}(C_{\max},\operatorname{TMC})$, respectively.
In all three cases, the guarantee on the total assignment cost is
exact.

These results have the following consequences for the special cases
discussed in Section~\ref{subsec:related}.
The algorithm for $P|\operatorname{TMC}\le\widehat U|C_{\max}$ is a
PTAS that satisfies the original budget, and it applies in particular
to the extension $c_{ij}=l_ic_j$ proposed by \citet{Li2024}.
For $\operatorname{LEX}(C_{\max},\operatorname{TMC})$ with
multiplicative costs, our algorithm improves the
$(r+1,1)$-approximation of \citet{Lee2014} to a
$(1+\varepsilon,1)$-approximation; for additive costs, it improves the
$(3r/2,1)$-approximation of \citet{Kononov2019} in the same way.
Here, $r$ is the approximation ratio of the algorithm used for
$P\Vert C_{\max}$ in those studies, whose algorithms have substantially
lower running times than ours.
For costs $c_{ij}=s_i+a_iw_j$, with no ordering relation imposed
between $s_i$ and $a_i$, we obtained the same guarantees in
quasi-polynomial time by assigning every job with processing time at
most $\varepsilon T/(2n)$ to a machine of minimum cost and the remaining
jobs by dynamic programming.
This yields a QPTAS that satisfies the original budget for
$c_{ij}=s_i+l_ip_j$, the second extension proposed by \citet{Li2024}.

%\section*{Acknowledgment}

\paragraph{AI use.} The author used ChatGPT and  Claude to assist with text editing. The author is responsible for all content of the paper.

\bibliographystyle{abbrvnat}
\bibliography{references}

\end{document}